\documentclass{article}
\usepackage{amsthm}
\usepackage{amsmath}
\theoremstyle{definition}
\newtheorem{definition}{Definition}[section]
\newtheorem{theorem}{Theorem}[section]
\newtheorem{lemma}[theorem]{Lemma}
\usepackage{arxiv}
\usepackage[utf8]{inputenc} 
\usepackage[T1]{fontenc}    
\usepackage{hyperref}       
\usepackage{url}            
\usepackage{booktabs}       
\usepackage{amsfonts}       
\usepackage{bm}
\usepackage{nicefrac}       
\usepackage{microtype}      
\usepackage{lipsum}
\usepackage{tikz}
\tikzstyle{vertex}=[circle, draw, inner sep=0pt, minimum size=6pt]
\newcommand{\vertex}{\node[vertex]}

\usepackage[]{algorithm2e}
\title{Strategyproof Facility Location Mechanisms on Discrete Trees}

\author{
 Alina Filimonov \\
Technion - Israel Institute of Technology\\
  Haifa, Israel \\

   \And
 Reshef Meir \\
  Technion - Israel Institute of Technology\\
  Haifa, Israel \\

}

\begin{document}
\maketitle

\begin{abstract}
We address the problem of strategyproof~(SP) facility location mechanisms on discrete trees. Our main result is a full characterization of onto and SP mechanisms. In particular, we prove that when a single agent significantly affects the outcome, the trajectory of the facility is almost contained in the trajectory of the agent, and both move in the same direction along the common edges. We show tight relations of our characterization to previous results on \emph{discrete lines} and on \emph{continuous trees}. We then derive further implications of the main result for infinite discrete lines. 
\end{abstract}

\section{Introduction}
 In facility location problems, a central planner has to determine the location of a public facility that needs to serve a set of agents. Once the facility is located, each agent incurs some cost. Importantly, in non-cooperative settings, agents may have an incentive to misreport their locations to decrease their costs. One key objective of the planner that received much attention in the multiagent systems literature is to design a mechanism that incentivizes agents to report their true locations, i.e., mechanisms that are strategyproof~(SP).

  An $n$-agent facility location mechanism on a domain $ \mathbb{D}$ receives a profile of the agents' locations $a = (a_1,\ldots, a_n) \in  \mathbb{D}^n$ and outputs a location in $ \mathbb{D}$ depending on the profile. We refer to the agent's location as her peak.
  We say that a mechanism is SP if it is a weakly dominant strategy of every agent to report truthfully.
 
 The fundamental characterization result for strategyproof facility location was given by Moulin~\cite{moulin1980strategy}, who characterized the class of deterministic SP mechanisms on the real line when the preferences of the agents are single-peaked as ``generalized median voter schemes" (g.m.v.s.'s).
 An agent with single-peaked preferences on a line prefers a closer location to her peak over a distant location on the same side of her peak. 
 
  Border and Jordan~\cite{border1983straightforward}  proved that the characterization also applies for cases where the preferences are ``quadratic" (i.e., symmetric and single-peaked)---the more common model in facility location used in this work as well. An agent with quadratic preferences on a line prefers a closer location to her peak over a distant location
  
  As quadratic preferences are a special case of single-peaked preferences, the class of SP mechanisms for quadratic preferences may be larger. This is indeed the case e.g. for mechanisms on the discrete lines~\cite{dokow2012mechanism}, but not on continuous lines~\cite{border1983straightforward}.
 
  Schummer and Vohra~\cite{schummer2002strategy} generalized the result of Border and Jordan to prove that an SP mechanism on a continuous tree, under quadratic preferences, is a consistent collection of g.m.v.s.
 
 As hinted above, the trigger for the current work is the observation by Dokow et al.~\cite{dokow2012mechanism} that results on continuous graphs do not carry over to discrete graphs. In particular, while g.m.v.s entails that the trajectory of the facility is contained in the trajectory of the moving agent on a line (we later observe this also applies for continuous trees), Dokow et al. show it is only ``almost contained" when the line is made of discrete vertices.
 Similarly, while a strategyproof onto mechanism on a continuous circle must be dictatorial~\cite{schummer2002strategy}, it is only ``almost dictatorial" when the circle is discrete~\cite{dokow2012mechanism}. These extensions may be subtle, but they help us understand what in the characterization is inherent to the topology of the graph.  
 
 Given these previous results, a natural question is whether a similar extension can be applied to discrete trees.
  
  As we will later show, a na\"ive extension of the properties defined in~\cite{dokow2012mechanism} fails. We therefore formulate similar properties to characterize the valid moves of the facility under SP, onto mechanisms on discrete trees. In particular, we provide a definition of ``almost Pareto efficient" mechanisms that might be of independent interest. 
  
Recent research by Peters et al.~\cite{peters2019unanimous} provides a different characterization of randomized strategyproof voting mechanisms on trees and on other graphs (which, of course, include deterministic mechanisms), under general single-peaked preferences. However, the class of strategyproof mechanisms on discrete trees under single-peaked preferences is not equivalent to the one under quadratic preferences and therefore the characterization in~\cite{peters2019unanimous} does not apply to our study.

\subsection{Structure and Contribution}
After some preliminary notation in Section~\ref{chap:prelims}, we provide an alternative characterization of onto, SP mechanisms on continuous trees in Section~\ref{sec:cont trees}, based on the work of Schummer and Vohra~\cite{schummer2002strategy}.

 We then present our main result in Section~\ref{sec:main}---a full characterization of onto, SP mechanisms on discrete trees. In contrast to the work of Dokow et al.~\cite{dokow2012mechanism}, our proof also works for infinite trees (with bounded degree).  

In Section~\ref{sec:linesSI}, we derive a characterization of SP and shift-invariant mechanisms on infinite discrete lines.

\section{Related Work}
\label{chap:related}
Following the initial work of Black~\cite{black1948rationale}, several researchers have developed characterizations of deterministic and probabilistic strategyproof facility location mechanisms in various scenarios.

     Schummer and Vohra~\cite{schummer2002strategy}, beyond their work on trees, showed that any onto SP mechanism on the continuous cycle must be a dictatorship and that any SP mechanism on a graph has a dictator in a subdomain. Their work was extended to discrete cycles in \cite{dokow2012mechanism}. 

    Additional variations of the problem include the multiple facility problem~\cite{chen2020strategyproof,liu2020multiple}, the obnoxious facility location problem~\cite{cheng2011mechanisms,cheng2013strategy}, the heterogeneous facility location problem~\cite{Anastasiadis2018HeterogeneousFL, duan2019heterogeneous}, and the activity scheduling problem~\cite{xu2020strategyproof}. 
    
      Todo et al.~\cite{todo2011false} extended Moulin's work for characterizing the class of false-name-proof mechanisms on the continuous line. Their work was extended to discrete structures in~\cite{nehama2019manipulations,todo2019false}. The motivation for designing such mechanisms is to prevent agents from submitting multiple reports under different identities, e.g.,
in internet polls by creating different e-mail addresses. A later work of Wada et al.~\cite{wada2018facility} on variable and dynamic populations characterizes mechanisms that incentivize the agents to participate in the reporting process.

Finally, concrete cost functions also allow us to measure the \emph{social cost} (e.g., as the sum or max of agents' costs). The research line of \emph{approximate mechanism design without money} builds on characterizations such as those mentioned above, and seeks the mechanisms that minimize the social cost among all strategyproof mechanisms. Incidentally, the iconic domain for this line of work, as reflected in the fundamental paper of Procaccia and Tennenholtz~\cite{procaccia2009approximate}, is the facility location problem. Their work was extended to the domain of continuous graphs by Alon et al.~\cite{alon2010strategyproof}. 
     In the context of onto and strategyproof mechanisms on trees (either continuous or discrete), the question of minimizing the utilitarian social cost, defined as the sum of agents' costs, is moot since there is a simple mechanism for trees (the median voter) that is both strategyproof and socially optimal.

\section{Preliminaries}
\label{chap:prelims}

Consider an unweighted, undirected, bounded degree discrete tree $T = (V,E)$ with a set $V$ of vertices and a set $E$ of edges. The sets $V$ and $E$ can be infinite. For any two vertices $v_1,v_2 \in V$, $d(v_1, v_2)$ is the length of the unique path between $v_1$ and $v_2$. The distance between two sets of vertices $A \subseteq V$ and $B \subseteq V$ is the length of the shortest path between any pair of vertices $a,b$, where $a\in A$ and $b \in B$. We sometimes refer to a discrete tree as the set of its vertices. Consequently, the distance between two subtrees of a tree is the distance between the corresponding sets of vertices. For $u,w \in V$ with $u \neq w$, $[u,w]$ is the sequence of vertices $v_0,\ldots, v_k$ on the unique path of length $k$ between $u$ and $v$ s.t. $v_0 = u, v_k = w$. We denote by $(u,w]$ the sequence $v_1,\ldots v_k$, and by $(u,w)$ the sequence $v_1,\ldots v_{k-1}$, where $\{v_i,v_{i+1}\} \in E$ for all $i= 0,\ldots k-1$. We say that $e = \{u,w\} \in [a,b]$ if $[u,w] \in [a,b]$. A line-graph is a tree with a maximum degree of 2.

Let $N = \{1,\ldots,n\}$ be the set of agents, and $ a = (a_1,\ldots,a_n) \in V^n$ be a location profile, where $a_i \in V$
denotes the location of agent $i$ for every $i \in N$. The location profile of all agents excluding agent $i$ is denoted by $a_{-i} \in V^{n-1}$.
A deterministic facility location mechanism on a discrete tree is a
function $f : V^n \rightarrow V $, that maps a given profile of the agents' locations to a single location.

The notation $v_j \succ_i v_k$ indicates that agent $i$ prefers vertex $v_j$ over vertex $v_k$. The notation $v_j \succeq_i v_k$ indicates that agent $i$ prefers vertex $v_j$ over vertex $v_k$, or is indifferent between the two. 

In this research we assume the agents' costs are inversely related to their distance from the chosen location. We refer to such cost functions as ``quadratic" costs. The class of preferences induced by quadratic costs is single-peaked and symmetric. Formally, for every agent $i \in N$, 
$$\forall v_j,v_k \in V: v_j \succ_i v_k \iff d(a_i, v_j) < d(a_i, v_k) $$

Next, we give the standard definitions of mechanism properties:
          \begin{definition}[Strategyproof] A mechanism $f$ is \textbf{strategyproof} (SP) if an agent does not benefit from reporting a false location. Formally, $f$ is strategyproof if for every agent $i \in N$, every profile $a \in V^n$ and every alternative location $a'_i \in V$, it holds that  $$d(a_i, f(a_i, a_{-i})) \leq  d(a_i,f(a'_i, a_{-i})).$$
         \end{definition}
     \begin{definition}[Onto] A mechanism $f$ is \textbf{onto}, if for every location $x \in V$ there is a location profile $a\in V^n$ s.t. $f(a)=x$.
    \end{definition}
    \begin{definition}[Unanimous] A mechanism f is \textbf{unanimous} if for every location $x \in V$, $f(x, \ldots , x) = x$
     \end{definition}
     Clearly, every unanimous mechanism is onto. The following lemma provides a necessary condition for an onto, SP mechanism on any domain.
     \begin{lemma}[Barbera and Peleg~\cite{barbera1990strategy}]\label{claim:unanimous}
     Every mechanism that is both onto and SP, is unanimous.
     \end{lemma}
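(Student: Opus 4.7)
The plan is to combine ontoness and SP in a standard ``one agent at a time'' substitution argument. Fix an arbitrary $x \in V$; I want to show $f(x,\ldots,x) = x$. By ontoness, there exists at least one profile $b = (b_1,\ldots,b_n) \in V^n$ with $f(b) = x$. The idea is to transform $b$ into the constant profile $(x,\ldots,x)$ by replacing the agents' reports one by one with $x$, using SP at each step to show that the outcome cannot move away from $x$.

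Concretely, I would prove by induction on $k \in \{0,1,\ldots,n\}$ that $f(\underbrace{x,\ldots,x}_{k},b_{k+1},\ldots,b_n) = x$. The base case $k=0$ is exactly $f(b) = x$. For the inductive step, assume the claim holds for $k-1$, i.e.\ $f(x,\ldots,x,b_k,b_{k+1},\ldots,b_n) = x$. Now consider agent $k$ with true peak $x$ in the profile $(x,\ldots,x,x,b_{k+1},\ldots,b_n)$, and apply the SP inequality with the deviation $a'_k := b_k$:
\[
d\bigl(x, f(x,\ldots,x,x,b_{k+1},\ldots,b_n)\bigr) \;\le\; d\bigl(x, f(x,\ldots,x,b_k,b_{k+1},\ldots,b_n)\bigr) \;=\; d(x,x) \;=\; 0.
\]
Since the distance on a tree is nonnegative and vanishes only at $x$ itself, this forces $f(x,\ldots,x,x,b_{k+1},\ldots,b_n) = x$, completing the induction. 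Taking $k = n$ yields $f(x,\ldots,x) = x$, as required.

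There is no substantive obstacle here; the only point to be careful about is the direction of the SP inequality. SP is invoked with the \emph{true} peak equal to the target $x$ and the \emph{deviation} equal to the original coordinate $b_k$, so that the inductive hypothesis gives a distance of zero on the right-hand side and squeezes the left-hand side to zero. The argument uses no structure of the tree beyond the fact that $d(x,x)=0$ and $d(\cdot,\cdot)\ge 0$, which is why the same proof of Barbera and Peleg goes through on any domain.
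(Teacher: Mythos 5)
Your proof is correct: the standard one-coordinate-at-a-time substitution argument, invoking SP with true peak $x$ and deviation $b_k$ so that the inductive hypothesis forces $d(x,f(\cdot))=0$, is exactly the argument behind this lemma. The paper states it as a cited result of Barbera and Peleg without proof, so there is nothing to compare against; your version fills that gap correctly and, as you note, uses nothing beyond the metric axioms, so it holds on any domain.
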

     
     The definitions above apply also for continuous trees. A finite continuous tree $G=(V,E)$ is a connected, acyclic collection of curves of finite length. $E$ is the set of curves and $V$ is the set of the extremities and intersections of the curves~\cite{schummer2002strategy}. Let $L\subseteq V$ denote the set of extremities only. For all $p_1,p_2 \in G$, $d(p_1,p_2)$ is the length of the unique path between $p_1$ and $p_2$. We denote by $(p_1,p_2)$ the open segment between $p_1$ and $p_2$, and by $[p_1,p_2]$ the closed segment between the two points. For any point $p$ and a set $S\subseteq G$ (which may itself be a segment), the notation $[p,S]$ stands for the segment $[p, s]$ where $s = \arg\min_{s \in S}d(p,s)$. For a mechanism on a continuous tree, the agents and the facility can be placed on arbitrary points on the edges. A mechanism on a continuous tree is therefore a function $f : G^n \rightarrow G$. The definitions are illustrated in Fig~\ref{fig:cont}.
     
\section{SP Mechanisms on Continuous Trees}\label{sec:cont trees}

Schummer and Vohra~\cite{schummer2002strategy} provided a characterization of onto, SP mechanisms on continuous trees. They showed that when the agents' preferences are quadratic, every SP, onto mechanism on the continuous tree is based on a set of generalized median voter schemes, defined in~\cite{moulin1980strategy}, satisfying a consistency condition.

In this section, we provide an alternative characterization of onto, SP mechanisms on continuous trees, that relies on previous works~\cite{border1983straightforward,schummer2002strategy}.

 \begin{figure}[t]
 \centering
 \includegraphics[scale=0.68]{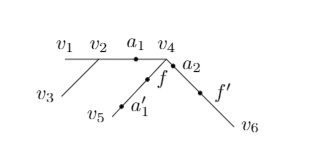}
    \caption{In this tree, $\{v_i| 1 \leq i \leq 6\}$ is the set of vertices. $\{v_1,v_3,v_5, v_6\}$ is the set of leafs. The agents are located at $a_1$ and $a_2$ (i.e., not on any vertex). The facility is located at point $f$, which is closer to $a_2$.}
    \label{fig:cont}
\end{figure}
\paragraph{Previous results on a Continuous Line}

The following definition given in~\cite{schummer2002strategy} describes onto and SP mechanisms on a continuous line. It is similar to the one introduced by Moulin~\cite{moulin1980strategy} for single-peaked preferences and confirmed for quadratic preferences by Border and Jordan~\cite{border1983straightforward}. 
\begin{definition}[Generalized Median Voter Scheme~\cite{schummer2002strategy}]
 A function $g^{xy}$ is called a \textbf{generalized median voter scheme} (g.m.v.s.) on $[x, y]$ if there exist $2^{|N|}$ points in $[x, y]$, $\{\alpha_S^{xy}\}_{S \subseteq N}$ such that:
\begin{enumerate}
    \item $S \subset R$ implies that $d(\alpha_S^{xy},x) \leq d(\alpha_R^{xy},x)$.
    \item $\alpha_{\emptyset}^{xy} = x$ and $\alpha_{N}^{xy} = y$.
    \item For all $a \in [x,y]^n$, $g^{xy}(a)$ is the unique point satisfying
    \begin{align}
        d(g^{xy}(a),x) = \textit{max}_{S\subset N}\textit{min}\{(d(a_i,x))_{i \in S}, d(\alpha_S^{xy},x)\}\notag
    \end{align}
\end{enumerate}
\end{definition}

The following is a key property of the class of g.m.v.s.'s defined in~\cite{border1983straightforward}. It implies that when an agent moves without crossing the mechanism outcome, the facility does not move. 

\begin{definition}[Uncompromising~\cite{border1983straightforward}]\label{def:unc}
A mechanism $f: \mathbb{R}^n \rightarrow \mathbb{R}$ is called \textbf{uncompromising} if for every $a \in \mathbb{R}^n$, $i \in N$, $a'_i \in \mathbb{R}$, it holds that:
\begin{enumerate}
    \item $a_i > f(a)$ implies that $f(a_{-i},a'_i) = f(a)$ for all $a'_i \geq f(a)$
    \item $a_i < f(a)$ implies that $f(a_{-i},a'_i) = f(a)$ for all $a'_i \leq f(a)$
\end{enumerate}
\end{definition}

\begin{lemma}[Border and Jordan~\cite{border1983straightforward}]\label{claim: unc lemma}
    Suppose that $f:\mathbb{R}^n \rightarrow \mathbb{R}$ is SP and unanimous. Then $f$ is uncompromising.
\end{lemma}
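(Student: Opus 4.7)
My plan is to reduce uncompromising to two elementary consequences of SP on the one-argument slice $g(s) := f(a_{-i}, s)$, and to invoke unanimity only in order to close the case where the deviation pushes the agent's peak \emph{further} from the current outcome. By the symmetry of the two clauses under coordinate reflection, I focus on clause~1: assume $a_i > f(a) =: x$ and $a'_i \geq x$, aiming at $f(a_{-i}, a'_i) = x$.

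The first step is to show that $g$ is weakly monotone. If $s_1 < s_2$ but $g(s_1) > g(s_2)$, then SP at peak $s_1$ with report $s_2$, together with SP at peak $s_2$ with report $s_1$, jointly force $s_1 \geq (g(s_1) + g(s_2))/2 \geq s_2$, a contradiction. The second step is the fixed-point identity $g(x) = x$: applying SP at the profile $(a_{-i}, x)$ with agent $i$ considering the deviation to $a_i$ yields $|x - g(x)| \leq |x - g(a_i)| = 0$. These already resolve the easy case $a'_i \in [x, a_i]$, since monotonicity sandwiches $x = g(x) \leq g(a'_i) \leq g(a_i) = x$.

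The hard case is $a'_i > a_i$. Set $y := g(a'_i)$; monotonicity gives $y \geq x$, while SP at the peaks $a_i$ and $a'_i$ forces either $y = x$ or $y \geq 2a_i - x > a_i$. Assuming the latter for contradiction, the fixed-point identity at the shifted profile gives $g(y) = y$. To inject unanimity I would \emph{reduce the other coordinates}: iterating the easy case on each agent $j \neq i$ shows that replacing $a_j$ by $x$ in the first profile, or by $y$ in the second, preserves the outcome, yielding $f(x, \ldots, x, a_i) = x$ and $f(y, \ldots, y, a'_i) = y$. The intended contradiction then comes from studying the reduced mechanism $s \mapsto f(x, \ldots, x, s)$: it is anchored by unanimity at $s = x$ and satisfies $f(x, \ldots, x, a_i) = x$, and SP restricts $f(x, \ldots, x, a'_i)$ to lie in $\{x\} \cup [2a_i - x, \infty)$; combining this with the monotone path in the other coordinates from $(x, \ldots, x, a'_i)$ up to $(y, \ldots, y, a'_i) = y$ produces an inconsistency with $y > x$.

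The main obstacle is exactly this closing of the outward case. Monotonicity and the fixed-point identity on a single slice are also satisfied by closest-point-projection one-dimensional mechanisms with a two-point range $\{x, y\}$, and these openly violate uncompromising; the argument must therefore genuinely combine unanimity with SP across profiles, and the delicate part is coupling the different choices of $a_{-i}$ via the anchor profiles $(x, \ldots, x)$ and $(y, \ldots, y)$ so that no intermediate ``stall'' between $x$ and $y$ is consistent.
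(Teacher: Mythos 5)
The paper does not actually prove this lemma---it imports it verbatim from Border and Jordan---so there is no in-paper argument to compare against, and I am judging your proposal on its own terms. Your first two steps are correct and standard: the two SP inequalities at peaks $s_1<s_2$ do force $s_1\geq (g(s_1)+g(s_2))/2\geq s_2$ when $g(s_1)>g(s_2)$, the fixed-point identity $g(x)=x$ is immediate, and together they dispose of the case $a'_i\in[x,a_i]$. The dichotomy $y=x$ or $y\geq 2a_i-x$ in the outward case, the identity $g(y)=y$, and the two anchor identities $f(x,\ldots,x,a_i)=x$ and $f(y,\ldots,y,a'_i)=y$ (each obtained by repeatedly moving an agent $j\neq i$ onto the current outcome) are all correctly derived.

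The gap is exactly where you place it, and it is genuine: the final ``inconsistency'' is asserted but not derived, and the ingredients you list do not produce it. Cross-coordinate monotonicity along the path from $(x,\ldots,x,a'_i)$ to $(y,\ldots,y,a'_i)$ gives only $f(x,\ldots,x,a'_i)\leq y$, so $f(x,\ldots,x,a'_i)$ may lie anywhere in $\{x\}\cup[2a_i-x,\,y]$; in particular it may equal $x$, in which case the hypothesized violation simply evaporates at the anchor profile and the reduction yields nothing, and it may equal $y$, in which case nothing you have written contradicts unanimity either (a two-valued option set $\{x,y\}$ for agent $i$ at the profile $(x,\ldots,x,\cdot)$ is perfectly consistent with slice monotonicity, the fixed-point identity, and $f(x,\ldots,x)=x$). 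The missing idea is to exploit a peak strictly inside the gap: if the option set of agent $i$ against $x^{n-1}$ contains $x$ and some $\mu\geq 2a_i-x$ but nothing in between, then for $s^*$ in the upper half of $(x,\mu)$ the outcome $f(x,\ldots,x,s^*)$ must jump to a value $\nu\geq\mu>s^*$, which strictly exceeds \emph{every} agent's peak; collapsing all agents one at a time onto $s^*$ (each move is an instance of your already-proven easy case, since each peak lies in $[x,\nu]$ and moves within that interval) forces $f(s^*,\ldots,s^*)=\nu\neq s^*$, contradicting unanimity. Even with this repair you must still handle the branch where the violation does not survive the reduction to $x^{n-1}$, which requires running the gap argument at the original $a_{-i}$ and dealing with agents located above the outcome; as written, the proposal proves the easy case and sets up the outward case, but does not close it.
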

As shown in~\cite{schummer2002strategy}, this also applies for every mechanism $f: [x,y]^n \rightarrow [x,y]$ where $[x,y]$ is a finite interval in $\mathbb{R}$. Note that every SP and onto mechanism is unanimous by Lemma~\ref{claim:unanimous}. Therefore we can rely on uncompromisingness for our characterization of onto and SP mechanisms.
\paragraph{Previous results on a Continuous Tree}
\begin{definition}[Graph Restriction~\cite{schummer2002strategy}]
 For any subgraph $G' \subset G$, the
\textbf{graph restriction} of $f:G^n \rightarrow G$ to G' is the function $ f |_{G'}: G'^{n} \rightarrow G$ s.t for all profiles $a \in G'^{n},f|_{G'}(a) = f(a)$
\end{definition} 
By~\cite{schummer2002strategy}, if mechanism $f$ is SP and onto, then for every $a \in [x,y]^n$, $f |_{xy}( a)\in[x, y]$. 

The following property characterizes onto and SP mechanisms on a continuous trees. For all $x, y \in L$ and $a_i \in G$, let the
unique point in $[x, y]$ closest to $a_i$ be denoted $a_{i}|_{xy} =\textit{arg min}_{z \in [x, y]} d(z, a_i)$.
\begin{definition}[Extended Generalized Median Voter Scheme~\cite{schummer2002strategy}]
 A mechanism $f$ is an \textbf{extended generalized median voter scheme} (e.m.v.s.) if
 \begin{enumerate}
     \item For all $w, x, y, z \in G$, $f |_ {xy}$ and $f |_{wz}$ are consistent g.m.v.s.'s
     \item For all $a \in G^n, f( a)$ is the unique point p such that for all $x, y \in L$,
$p \in [x, y]$ implies $f | _{xy}( a|_{xy})=p$
 \end{enumerate}
\end{definition}
\begin{theorem}[Schummer and Vohra~\cite{schummer2002strategy}]\label{claim: emvs}
 For any continuous tree $G$, a rule $f$ is SP and onto if and
only if it is an e.m.v.s.   
\end{theorem}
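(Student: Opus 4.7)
The plan is to prove both directions of the equivalence, leaning on the Border--Jordan characterization of SP mechanisms on finite intervals (Lemma~\ref{claim: unc lemma}) and on the tree-metric identity $d(a_i, r) = d(a_i, a_i|_{xy}) + d(a_i|_{xy}, r)$ valid whenever $r \in [x,y]$. For the easy direction, suppose $f$ is an e.m.v.s. Ontoness follows from the g.m.v.s.\ structure: for any $p \in G$ and any leaves $x, y \in L$ with $p \in [x,y]$, the conditions $\alpha^{xy}_\emptyset = x$ and $\alpha^{xy}_N = y$ produce a profile $a \in [x,y]^n$ with $f|_{xy}(a) = p$, and property~2 of the e.m.v.s.\ then forces $f(a) = p$. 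For strategyproofness, suppose agent $i$ reports $a'_i$ instead of $a_i$, changing the outcome from $p = f(a)$ to $q = f(a_{-i}, a'_i)$. Choose leaves $x, y \in L$ with both $p, q \in [x, y]$; the tree identity above applied to both $p$ and $q$ shows that the comparison of $d(a_i, p)$ with $d(a_i, q)$ reduces to the same comparison from the projection $a_i|_{xy}$, on which the g.m.v.s.\ $f|_{xy}$ is strategyproof by Lemma~\ref{claim: unc lemma}.

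For the converse, assume $f$ is SP and onto; Lemma~\ref{claim:unanimous} gives unanimity. Property~1 of e.m.v.s.\ is obtained as follows. For each pair of leaves $x, y$, the result from~\cite{schummer2002strategy} cited just before the definition places $f|_{xy}$ as an SP and unanimous mechanism on the finite interval $[x, y]$. Lemma~\ref{claim: unc lemma}, extended to finite intervals as noted right after its statement, makes $f|_{xy}$ uncompromising, and the Border--Jordan characterization identifies it as a g.m.v.s. Consistency between $f|_{xy}$ and $f|_{wz}$ is automatic, since both restrictions equal $f$ on every profile lying in $[x,y]^n \cap [w,z]^n$.

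The main obstacle is establishing property~2: for every $a \in G^n$, the outcome $f(a)$ must equal the unique point $p$ such that $p \in [x, y]$ implies $f|_{xy}(a|_{xy}) = p$. Set $p = f(a)$ and fix leaves $x, y$ with $p \in [x, y]$; the core claim is $f(a|_{xy}) = p$, after which $f|_{xy}(a|_{xy}) = p$ follows by definition of restriction. The strategy is to replace each $a_i$ by its projection $a_i|_{xy}$ one agent at a time and prove invariance of the outcome at each step. If $q$ denotes the outcome after a single such substitution, the two strategyproofness inequalities $d(a_i, p) \le d(a_i, q)$ and $d(a_i|_{xy}, q) \le d(a_i|_{xy}, p)$, combined with the tree-metric identity for points on $[x,y]$, ultimately pin $q$ down to $p$; the technically delicate sub-step, which is where I expect the work to be concentrated, is ruling out the possibility that $q$ lies on a branch off $[x,y]$, requiring a secondary application of strategyproofness to a further projection. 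Uniqueness of the point satisfying property~2 is easier: if distinct $p_1 \neq p_2$ both satisfied it, picking leaves $x, y$ with $\{p_1, p_2\} \subseteq [x, y]$ would force $f|_{xy}(a|_{xy})$ to equal both, a contradiction.
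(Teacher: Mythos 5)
There is nothing in the paper to compare against here: Theorem~\ref{claim: emvs} is imported verbatim from Schummer and Vohra~\cite{schummer2002strategy} and the paper offers no proof of it (it even omits the definition of consistency on the grounds that it is not needed). So your proposal has to stand on its own, and while its architecture matches the standard route --- easy direction via projection plus SP of each g.m.v.s.\ on its interval, hard direction via Border--Jordan on each leaf-to-leaf segment plus an invariance-under-projection argument for property~2 --- the decisive step of the hard direction is not actually carried out. After moving agent $i$ from $a_i$ to $a_i|_{xy}$, the two strategyproofness inequalities combined with $d(a_i,p)=d(a_i,a_i|_{xy})+d(a_i|_{xy},p)$ yield only that $d(a_i|_{xy},q)=d(a_i|_{xy},p)$ and that $a_i|_{xy}\in[a_i,q]$. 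This leaves \emph{two} escape routes for $q$, not one: besides the branch-off-$[x,y]$ case you flag and defer to ``a secondary application of strategyproofness,'' there is the case where $q$ lies on $[x,y]$ but on the opposite side of $a_i|_{xy}$ from $p$ at the same distance, which your sketch does not acknowledge at all. Neither case is eliminated by the inequalities you have written down, and the natural tool for the second case --- uncompromisingness of $f|_{xy}$ --- is not directly applicable at the intermediate profile because the remaining agents have not yet been projected into $[x,y]$. This is precisely where the bulk of Schummer and Vohra's argument lives, so the proposal as written has a genuine gap rather than a routine omission.

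Two smaller points. First, you justify strategyproofness of the g.m.v.s.\ $f|_{xy}$ by Lemma~\ref{claim: unc lemma}, but that lemma goes the other way (SP and unanimous implies uncompromising); the direction you need --- that a g.m.v.s.\ is SP --- is Moulin's theorem as extended by Border and Jordan, not the uncompromising lemma. Second, your claim that consistency of the $f|_{xy}$'s is ``automatic'' because they agree on common profiles cannot be checked against anything in this paper, since the consistency condition is a relation among the threshold points $\alpha_S^{xy}$ across segments and its definition is deliberately omitted here; agreeing on common profiles is plausibly equivalent to it, but that equivalence is itself something to prove, not a tautology.
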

We omit the definition of consistency since it is not relevant for our purpose.
\subsection{Our Characterization}
We rely on the characterization in~\cite{schummer2002strategy} to formulate our characterization for SP and onto mechanisms on continuous trees, which is a conceptual step on the way to our main result on discrete trees. The following properties limit the effect of an agent's move on the outcome of a mechanism on a continuous tree.
\begin{definition}[Tree Monotone]
 A mechanism $f$ on the discrete tree is \textbf{tree monotone}~(TMON) if for every profile $a \in G^n$, every agent $i\in N$, every location $a'_i \in G$ and every segment $[x,y]$ s.t. $[x,y] \subseteq [a_i,a'_i] \cap [f(a),f(a_{-i},a'_i)]$, it holds that 
 \begin{align}
     d(a_i, x) < d(a_i, y) \Leftrightarrow d(f(a), x) < d(f(a), y) \notag
 \end{align}
 Intuitively, TMON means that the facility moves in the same direction as the moving agent (if it crosses the agent's path at all).
\end{definition}
\begin{definition}[Trajectory contained] A mechanism on a tree is \textbf{Trajectory Contained} (TC) if for all $a, a'=(a_{-i},a'_i)$ it either  holds that $[f(a),f(a')]\subseteq [a_i,a'_i]$, or $f(a)=f(a')$.
\end{definition}
In words, either the trajectory of the outcome is contained in the trajectory of the agent, or the facility does not move at all. In Fig.~\ref{fig:cont}, when agent 1 moves from $a_1$ to $a'_1$, the facility moves from $f$ to $f'$. This violates TC since $[v_4,f'] \not\subseteq [a_i,a'_i]$. This also violates TMON since the facility and the agent move in opposite directions in the segment $[f,v_4]$.

\begin{lemma}\label{claim: TC1}
    Every onto and SP mechanism $f$ on the continuous tree is TC.
\end{lemma}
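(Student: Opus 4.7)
The plan is to reduce to the one-dimensional case on a line $[x,y]$ between two leaves using Theorem~\ref{claim: emvs}, apply uncompromisingness there, and then lift the conclusion back to the tree via a standard tree-projection fact. If $f(a)=f(a')$ there is nothing to prove, so assume $f(a)\neq f(a')$. Since $G$ is a finite continuous tree, one can choose leaves $x,y\in L$ so that $[f(a),f(a')]\subseteq [x,y]$, by extending the segment $[f(a),f(a')]$ through the tree until leaves are reached in both directions. By Theorem~\ref{claim: emvs} the mechanism $f$ is an e.m.v.s., so $f|_{xy}$ is a g.m.v.s.\ on $[x,y]$, and the defining property of e.m.v.s.\ gives $f|_{xy}(a|_{xy})=f(a)$ and $f|_{xy}(a'|_{xy})=f(a')$ because both outcomes lie in $[x,y]$.

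Next I would work inside $[x,y]$. The profiles $a|_{xy}$ and $a'|_{xy}$ differ only in coordinate $i$, with reports $a_i|_{xy}$ and $a'_i|_{xy}$. Since $f|_{xy}$ is SP and onto (hence unanimous by Lemma~\ref{claim:unanimous}), Lemma~\ref{claim: unc lemma} implies it is uncompromising. A case analysis---apply uncompromisingness to both $a|_{xy}$ and $a'|_{xy}$, and split on the relative order of the two reports and the two outcomes on the line---then shows that whenever a single-agent deviation changes the outcome of a one-dimensional uncompromising rule, the old and new outcomes both lie in the closed interval between the agent's two reports. This gives $[f(a),f(a')]\subseteq [a_i|_{xy},a'_i|_{xy}]$ as segments of $[x,y]$.

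To finish, I would lift the inclusion back to the tree. Because $f(a)\neq f(a')$ the segment $[a_i|_{xy},a'_i|_{xy}]$ is nondegenerate, so the two projections are distinct. A basic tree-geometry fact states that when two points have distinct projections onto a segment, the segment between their projections is a sub-segment of the unique tree path between the points, i.e.\ $[a_i|_{xy},a'_i|_{xy}]\subseteq [a_i,a'_i]$. Chaining the two inclusions yields $[f(a),f(a')]\subseteq [a_i,a'_i]$, which is TC. The main obstacle is the line-step case analysis: uncompromisingness is a local statement (``a non-crossing move leaves the outcome put'') while TC is a global containment, so one must invoke the uncompromising condition at both $a|_{xy}$ and $a'|_{xy}$ and verify every ordering of the four points $a_i|_{xy}$, $a'_i|_{xy}$, $f(a)$, $f(a')$ on the line; the projection step at the end is the only genuinely tree-specific ingredient.
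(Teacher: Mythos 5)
Your proposal is correct and follows essentially the same route as the paper's proof: both invoke Theorem~\ref{claim: emvs} to reduce to the g.m.v.s.\ $f|_{xy}$ on a leaf-to-leaf segment $[x,y]$ containing $[f(a),f(a')]$, apply uncompromisingness (Lemma~\ref{claim: unc lemma}) there, and relate the projected peaks $a_i|_{xy}$, $a'_i|_{xy}$ to the path $[a_i,a'_i]$. The only difference is presentational: the paper argues by contradiction with a two-case split on whether the projections coincide, while you give a direct containment argument.
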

\begin{proof}
Consider an SP, onto mechanism $f:G^n \rightarrow G$ on a continuous tree. Assume by contradiction that there exists an agent $i$ and two profiles $a, a'=(a'_i,a_{-i})$ s.t. $ f(a) \neq f(a') $ and w.l.o.g., that $f(a) \notin [a_i,a'_i]$.
By Theorem~\ref{claim: emvs}, $f$ is an e.m.v.s. and therefore it is a collection of g.m.v.s's. Let $g^{xy}$ denote the g.m.v.s. on $[x,y]$, where $x,y \in L$ and $[f(a), f(a')] \subseteq [x,y]$. Note that $g^{xy} = f|_{xy} $. By Lemma~\ref{claim: unc lemma}, $g^{xy}$ is uncompromising.
From the second property of the e.m.v.s., it holds that
\begin{align}
    g^{xy}(a|_{xy}) = f(a) \textit{ and }g^{xy}(a'^{xy}) = f(a')\notag
\end{align}
We divide into two cases, according to the locations $a_i|_{xy},a'_{i}|_{xy}$. Note that for any other agent $j \neq i$, $a'_j=a_j$ and thus $a'_j|_{xy} = a_j|_{xy}$.
\begin{enumerate}
    \item $|[a_i,a_i|_{xy}] \cap [a'_i,a'_i|_{xy}]| > 0$: In this case, $a_i|_{xy} = a'_i|_{xy}$ and therefore  $ g^{xy}(a|_{xy}) = g^{xy}(a'|_{xy})$ contradicting the assumption that $f(a) \neq f(a')$.
     \item $|[a_i,a_i|_{xy}] \cap [a'_i,a'_i|_{xy}]| = 0$: In this case, the path between $a_i$ and $a'_i$ must intersect the segment $[x,y]$ and therefore,
     $[x,y] \cap [a_i,a'_i] = [a_i|_{xy},a'_i|_{xy}]$. Recall that by our initial assumption $$f(a) = g^{xy}(a|_{xy}) \notin [a_i,a'_i],$$ thus $f(a) \notin [a_i|_{xy},a'_i|_{xy}]$, contradicting the uncompromisingness of $g^{xy}$.\qedhere
\end{enumerate}
\end{proof}
\begin{lemma}\label{claim: TMON1}
    Every SP mechanism on the continuous tree is TMON.
\end{lemma}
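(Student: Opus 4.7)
The plan is to reduce everything to a single geodesic segment, where ``direction'' becomes unambiguous, and then combine Lemma~\ref{claim: TC1} (TC) with strategyproofness to pin down the orientation.

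First I would dispose of the trivial case. Set $a' = (a_{-i}, a'_i)$. If $f(a) = f(a')$, then $[f(a), f(a')]$ is a singleton, so any $[x, y] \subseteq [a_i, a'_i] \cap [f(a), f(a')]$ forces $x = y$, and both sides of the TMON equivalence are false, making it trivially true. So I may assume $f(a) \ne f(a')$.

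Next, invoke Lemma~\ref{claim: TC1} to get $[f(a), f(a')] \subseteq [a_i, a'_i]$. In particular, both $f(a)$ and $f(a')$ lie on the single geodesic from $a_i$ to $a'_i$, so ``positions'' along this geodesic can be measured by distance from $a_i$. By strategyproofness applied to agent $i$ at true peak $a_i$ considering the deviation to $a'_i$, $d(a_i, f(a)) \leq d(a_i, f(a'))$. Combined with $f(a) \ne f(a')$ and the fact that both points sit on $[a_i, a'_i]$, this yields a strict inequality $d(a_i, f(a)) < d(a_i, f(a'))$, i.e., $f(a)$ strictly precedes $f(a')$ on the parametrization of $[a_i, a'_i]$ starting from $a_i$.

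Finally, for any $[x, y] \subseteq [a_i, a'_i] \cap [f(a), f(a')]$, the points $x, y$ lie on the sub-segment $[f(a), f(a')]$ of $[a_i, a'_i]$, oriented consistently with the $a_i \to a'_i$ direction by the previous step. Hence $d(a_i, x) < d(a_i, y)$ is equivalent to $x$ preceding $y$ along $a_i \to a'_i$, which in turn is equivalent to $x$ preceding $y$ along $f(a) \to f(a')$, i.e., $d(f(a), x) < d(f(a), y)$. This establishes the TMON equivalence.

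The only mildly delicate point is the strictness in step two, needed to upgrade the SP inequality to a genuine orientation of $[f(a), f(a')]$ inside $[a_i, a'_i]$; this hinges on the fact that on a single segment two distinct points with $d(a_i, f(a)) \le d(a_i, f(a'))$ must in fact satisfy the strict inequality. Everything else is bookkeeping about geodesics on a tree.
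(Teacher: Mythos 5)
Your argument is internally sound, but it does not prove the lemma as stated: Lemma~\ref{claim: TMON1} assumes only that $f$ is SP, whereas your very first substantive step invokes Lemma~\ref{claim: TC1}, which requires $f$ to be both SP \emph{and onto}. What you have proved is ``every onto, SP mechanism on the continuous tree is TMON,'' a strictly weaker claim. (There is no circularity problem --- Lemma~\ref{claim: TC1} does not rely on TMON --- and your weaker version does suffice for the forward direction of Theorem~\ref{claim: TC theorem1}, where ontoness is assumed anyway; but as a proof of the lemma itself it has a genuine gap. Note also that the two lemmas are deliberately stated with different hypotheses in the paper.)

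The intended argument is a direct one from SP alone, with no appeal to TC. Let $f:=f(a)$, $f':=f(a')$ and suppose $[p,q] := [a_i,a'_i]\cap[f,f']$ is nondegenerate, with $d(a_i,p)<d(a_i,q)$. Since the paths meet exactly in $[p,q]$, the point $p$ is the projection of $a_i$ onto $[f,f']$ and $q$ is the projection of $a'_i$, so $d(a_i,u)=d(a_i,p)+d(p,u)$ for every $u\in[f,f']$, and similarly for $a'_i$ through $q$. Strategyproofness in both directions gives $d(a_i,f)\le d(a_i,f')$ and $d(a'_i,f')\le d(a'_i,f)$, which after cancelling the common prefixes become $d(p,f)\le d(p,f')$ and $d(q,f')\le d(q,f)$. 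If the orientation were reversed, i.e.\ $q$ lay between $f$ and $p$ on $[f,f']$, then $d(p,f)=d(q,f)+d(p,q)$ and $d(q,f')=d(p,f')+d(p,q)$; adding the two SP inequalities then yields $2d(p,q)\le 0$, contradicting $p\neq q$. Hence $d(f,p)<d(f,q)$, and the TMON equivalence for any $[x,y]\subseteq[p,q]$ follows by the same bookkeeping you carry out in your last step. I would rewrite your proof along these lines (or at minimum flag that your version needs ontoness and justify why that restriction is acceptable where the lemma is used).
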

The proof follows from the definitions of TMON and SP.

\begin{lemma}\label{claim: tc and tmon is SP}
    Every TC and TMON mechanism on the continuous tree is SP.
\end{lemma}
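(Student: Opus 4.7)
The plan is a short direct argument via contrapositive: show that if agent $i$ could strictly benefit by misreporting, then either TC or TMON would be violated. Fix any profile $a$, agent $i$, and alternative $a'_i$, and write $a' = (a_{-i}, a'_i)$. The goal is to establish $d(a_i, f(a)) \leq d(a_i, f(a'))$.

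First, I would invoke TC to constrain where the facility can move. Either $f(a) = f(a')$, in which case the SP inequality is trivial, or $[f(a), f(a')] \subseteq [a_i, a'_i]$. So I only need to handle the nontrivial case $f(a) \neq f(a')$, where the entire segment between the two facility locations lies on the agent's path.

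In the nontrivial case, I would apply TMON with the specific choice $x = f(a)$ and $y = f(a')$. This segment satisfies $[x,y] \subseteq [a_i, a'_i]$ by TC and $[x,y] \subseteq [f(a), f(a')]$ trivially, so TMON is applicable. Since $d(f(a), x) = 0$ and $d(f(a), y) = d(f(a), f(a')) > 0$, we have $d(f(a), x) < d(f(a), y)$, and the biconditional in TMON then yields $d(a_i, x) < d(a_i, y)$, i.e., $d(a_i, f(a)) < d(a_i, f(a'))$. Combining both cases, the truthful report weakly dominates, and $f$ is SP.

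I do not anticipate any real obstacle here: unlike Lemma~\ref{claim: TC1}, we are not extracting structure from SP but rather plugging definitions into one another. The only subtle point is that TMON as stated requires a genuine segment $[x,y]$, so one must split off the case $f(a)=f(a')$ before invoking it; this is handled automatically by TC. The cleanness of the argument is essentially the reason TC and TMON were defined the way they were.
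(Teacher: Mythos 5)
Your proof is correct and follows essentially the same route as the paper's: use TC to reduce to the case where $[f(a),f(a')]\subseteq[a_i,a'_i]$, then apply TMON to that segment to conclude $d(a_i,f(a))\leq d(a_i,f(a'))$. You simply make explicit the choice $x=f(a)$, $y=f(a')$ and the trivial case split that the paper leaves implicit.
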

\begin{proof}
Assume by contradiction that there exists a TC, TMON mechanism $f$ that violates SP. Consider a pair of profiles $a,a' = (a_{-i},a'_i)$ s.t. $f(a) \neq f(a')$. 
Since $f$ is TC, it holds that $$[a_i,a'_i] \cap [f(a),f(a')] = [f(a),f(a')]$$ and therefore by TMON,  
$$d(a_i, f(a)) \leq d(a_i,f(a').$$
Therefore no agent can benefit from reporting a false location.
\end{proof}

\begin{theorem}\label{claim: TC theorem1}
    An onto mechanism on the continuous tree is SP if and only if it is TMON and TC.
\end{theorem}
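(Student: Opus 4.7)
The plan is simply to package the three preceding lemmas into one biconditional, since each direction of the theorem matches one lemma (or a pair of lemmas) that has already been proved.

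For the forward direction, I would assume $f$ is onto and SP. By Lemma~\ref{claim: TC1}, any such mechanism on a continuous tree is TC. By Lemma~\ref{claim: TMON1}, SP alone already forces $f$ to be TMON (the onto hypothesis is not even needed here). So the forward direction requires nothing beyond invoking these two lemmas in sequence.

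For the reverse direction, I would assume $f$ is TMON and TC (onto is again not needed) and cite Lemma~\ref{claim: tc and tmon is SP} directly: this lemma states precisely that TC together with TMON implies SP on a continuous tree. The argument in that lemma proceeds by taking any pair of profiles $a, a' = (a_{-i}, a_i')$ with $f(a) \neq f(a')$, using TC to conclude that $[f(a), f(a')] \subseteq [a_i, a_i']$, and then using TMON on this common segment to deduce $d(a_i, f(a)) \leq d(a_i, f(a'))$, ruling out any profitable deviation.

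There is no real obstacle; the whole content has been offloaded into the three lemmas above, and the theorem is essentially a restatement that combines them. The only mild subtlety is keeping track of which hypotheses are used in which direction: ontoness appears only through Lemma~\ref{claim: TC1} (via Lemma~\ref{claim:unanimous} to get unanimity, which is what powers the uncompromisingness argument), whereas TMON and the reverse direction do not need it.
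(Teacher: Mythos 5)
Your proposal is correct and matches the paper's own proof exactly: the theorem is proved by combining Lemma~\ref{claim: TC1} and Lemma~\ref{claim: TMON1} for the forward direction and Lemma~\ref{claim: tc and tmon is SP} for the converse. Your additional bookkeeping about where the onto hypothesis is actually used is accurate and slightly more explicit than the paper's one-line citation.
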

\begin{proof}
The proof follows from Lemmas~\ref{claim: TC1},~\ref{claim: TMON1} and~\ref{claim: tc and tmon is SP}.
\end{proof}

\section{SP Mechanisms on Discrete Trees}\label{sec:main}
In this section we provide a complete characterization of onto, SP mechanisms on discrete trees, generalizing the result of Dokow et al. for discrete lines~\cite{dokow2012mechanism}.

Before presenting the main result, we show that a na\"ive extension of the properties defined for mechanisms on discrete lines in~\cite{dokow2012mechanism} fails for trees. Their result implies that an agent can affect the outcome of the mechanism only in a way in which its trajectory intersects the trajectory of the facility in at least two consecutive points.

The mechanism described in  Fig.~\ref{fig:01} is an example of an SP, onto mechanism that violates a na\"ive extension of this property.
Agent~1 is located at vertex $y=0$. Agent~2 is initially at $3$ and moves to $4$. As a result, the facility moves from vertex $1$ to $0$ without intersecting the segment $[3,4]$.
 
 \paragraph{Quadratic vs. single-peaked preferences} 
 \begin{definition}[Single-Peaked~\cite{peters2019unanimous}]
 A preference of an agent $i$ is \textbf{single-peaked} on a graph $G$ if there is a spanning tree $T = (V,E)$ of $G$ such that
for all distinct $x, y \in V$ with $a_i \neq y$,
$$x \in [a_i, y) \Rightarrow x \succ_i y $$
 \end{definition}
 The following example shows that under single-peaked preferences, the mechanism in Fig.~\ref{fig:01} is not SP. Assume the preferences of the agents are as follows:
 \begin{enumerate}
     \item Agent 1: $0 \succ_1 2 \succ_1 3 \succeq_1 1 \succ_1 4$
     \item Agent 2: $3 \succ_2 4 \succeq_2 2 \succ_2 0\succ_2 1$
 \end{enumerate}
 Both agents have single-peaked preferences according to the definition in~\cite{peters2019unanimous}. In particular, the preference of the first agent is quadratic. The preference of the second agent is not, since she strictly prefers vertex 0 over vertex 1. If both agents report truthfully, the facility will be located at vertex 1. However, if the second agent reports vertex 4 as her peak, the facility will be located at vertex 0 and the agent will benefit.  
 
 We conclude that similarly to the case of the line-tree, quadratic preferences allow more SP mechanisms than single-peaked preferences, and therefore the characterization of probabilistic SP mechanisms under single-peaked preferences in~\cite{peters2019unanimous} does not apply for quadratic preferences. 
\begin{figure}[t]
 \centering
\begin{minipage}{0.3\textwidth}
\input{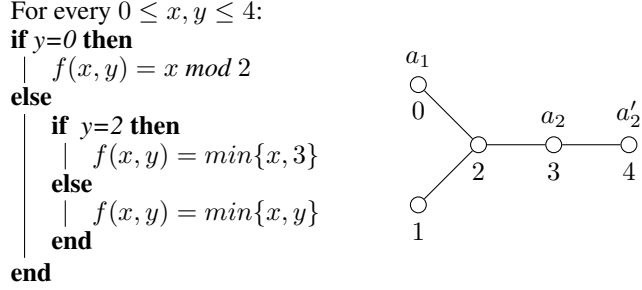}
\end{minipage}~~\begin{minipage}{0.3\textwidth}
\begin{tikzpicture}
\vertex (a) at (0.2,0.8) [label=above:$a_1$, label=below:$0$] {};
 \vertex (b) at (0.2,-0.8)  [label=below:$1$] {};
  \vertex (k) at (1,0) [label=below:$2$]  {};
 \vertex (c) at (2,0)  [label=above:$a_2$, label=below:$3$] {};  
\vertex (d) at (3,0)  [label=above:$a'_2$, label=below:$4$] {};

\path

(a) edge (k)
(b) edge (k)
(k) edge (c)
(c) edge (d)

;   
\end{tikzpicture}
\end{minipage}
    \caption{An example of a mechanism that is SP and onto on a discrete tree, that violates the properties defined by Dokow et al.~\cite{dokow2012mechanism}.}
    \label{fig:01}
\end{figure}

\subsection{Basic Mechanism Properties}\label{sec:basic}
Here we define several new terms which are specific for mechanisms on discrete trees.
 \begin{definition}[Tree]
$tree(a \rightarrow b, v)$ is the subtree which includes only $v$ and vertices which are accessible from $v$, via the edges that are not in $[a,b]$.
 \end{definition}

  \begin{definition}[Depth]
  $depth(a \rightarrow b,v)$ is the distance of vertex $v$ from $[a,b]$.
\end{definition}
We demonstrate the above definitions in Fig.~\ref{fig2}: $tree(a_1 \rightarrow a'_1,v_1)$ contains  $v_1$ (at depth 0) and another node at depth~1. $depth(a_1 \rightarrow a'_1,v_2)=2$ in the  subtree rooted by $a_1$.

 Our next definitions are intended to generalize the properties defined in \cite{dokow2012mechanism}. 
\begin{definition}[$m$-tree step independent]
A mechanism $f$ is \boldsymbol{$m$}\textbf{-tree step independent} ($m$-TSI) if for every $a \in V^n$, $i \in N$, $a'_i \in V$ s.t. $d([a_i , a'_i ], f(a)) > m$, it holds that 
\begin{align}
    tree(a_i \rightarrow a'_i,f(a)) =  tree(a_i \rightarrow a'_i,f(a_{-i},a'_i))\notag
\end{align}
\end{definition}
For $m=1$, the definition states that for every $a \in V^n$, $i \in N$, $a'_i \in V$ s.t.  
    \begin{align}
        |[f(a),f(a_{-i},a'_i)]\cap [a_i,a'_i]| \geq 2\notag
    \end{align}
    it holds that 
    $$d(f(a),[a_i,a'_i]) \leq 1$$
Fig.~\ref{fig2} illustrates a violation of the property. Mechanism $g$ violates $1$-TSI since 
$$tree(a_1 \rightarrow a'_1, g(a)) \neq  tree(a_1 \rightarrow a'_1, g(a'))\textit{ and } d([a_1,a'_1],g(a)) = 3 $$
\begin{definition}[Depth Balanced] A mechanism $f$ is \textbf{depth balanced} (DB) if for every $a \in V^n$, $i \in N$, $a'_i \in V$, it holds that 
 \begin{align}
     d(tree(a_i \rightarrow a'_i, f(a)), tree(a_i \rightarrow a'_i, f(a_{-i},a'_i))) \notag\geq \\ |depth( a_i \rightarrow a'_i,f(a)) - depth( a_i \rightarrow a'_i,f(a_{-i},a'_i))|\notag
 \end{align}
\end{definition}
Informally, DB means that when the facility moves as a result of a single agent's deviation, the distance between the tree of the original outcome and the tree of the new outcome is bigger than the difference between the depths of the outcomes. Fig.~\ref{fig2} illustrates a violation of the property by mechanism $g$. $g$ violates DB since
\begin{align*}
&d(tree(a_1 \rightarrow a'_1, g(a)), tree(a_1 \rightarrow a'_1,g(a'_1,a_2))) = 1 \notag\\
&<|depth(a_1 \rightarrow a'_1, g(a)) - depth(a_1 \rightarrow a'_1, g(a'_1,a_2)|=3
\end{align*}

\begin{definition}[Tree Pareto Location] Let $Int(a)$ be the set of interior vertices of the subtree defined by profile $a$: $$Int(a) = \{v \in V| \exists a_i,a_j \in a \textit{ s.t. }v \in (a_i,a_j)\}.$$ A location $x \in V$ is \textbf{tree Pareto} w.r.t. $a$ if $d(x,Int(a))\leq 1$ or $x = a_i$ for some $i \in N$.
\end{definition}
This definition generalizes the definition in~\cite{dokow2012mechanism} of a Pareto location on the discrete line. Note that it is weaker than the standard definition of Pareto.
\begin{definition}[Tree Pareto Mechanism] A mechanism $f$ is \textbf{tree Pareto}~(TPAR) if for every profile $a \in V^n$, $f(a)$ is a tree Pareto location w.r.t. $a$.
\end{definition}

 Mechanism $f$ in Fig.~\ref{fig2} violates TPAR w.r.t. profile $a'$ since $$d(f(a'), Int(a')) = d(f(a') , [v_1,v_2])=2$$
 
 \begin{definition}[Almost Depth Restricted] A mechanism $f$ is \textbf{almost depth restricted} (ADR) if for every $a \in V^n$, $i \in N$, $a'_i \in V$ s.t. $$f(a) \neq f(a'_i,a_{-i})\textit{ and } tree(a_i\rightarrow a'_i, f(a)) =  tree(a_i\rightarrow a'_i, f(a'_i,a_{-i})),$$ the following holds: Let $z$ be the unique point s.t. 
 $$z = [a_i,f(a)] \cap [a_i, f(a'_i,a_{-i})]\cap [f(a),f(a'_i,a_{-i})]$$ Then  
 \begin{enumerate}
     \item $d(f(a),z)= d(f(a'_i,a_{-i}),z)$
     \item $d(f(a),z) =  1$
 \end{enumerate}
\end{definition}
Informally, ADR means that when the facility moves as a result of a single agent's deviation, without intersecting the trajectory of the agent in at least two points, the new outcome has the same parent as the original outcome in the tree induced by the deviation. That is, either the facility does not move, or it moves to a sibling node. We can think of this property as ``approximate uncompromising" (replacing `1' with `0' in the definition would yield exact uncompromising). 
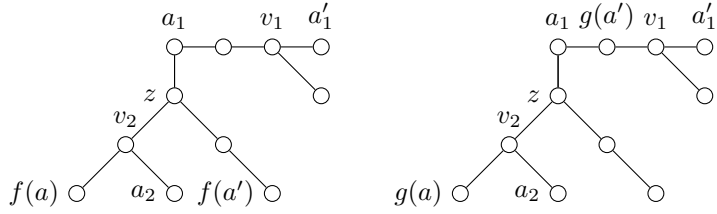
\begin{figure}[t]
 \centering
 \begin{minipage}{0.3\textwidth}
\begin{tikzpicture}[scale=0.65]
\vertex (h) at (-5,1) [label=above:$a_1$] {};
\vertex (a) at (-4,1)  {};
\vertex (b) at (-3,1) [label=above:$v_1$]{};  
\vertex (c) at (-2,1) [label=above:$a'_1$] {};

\vertex (i) at (-5,0)  [label=left:$z$]{};

\vertex (d) at (-6,-1) [label=above:$v_2$] {};
\vertex (e) at (-4,-1) {};
\vertex (f) at (-7,-2)  [label=left:$f(a)$]{};
\vertex (g) at (-3,-2)  [label=left:$f(a')$]{};
\vertex (j) at (-2,0)  {};
\vertex (k) at (-5,-2)  [label=left:$a_2$] {};

\path
(h) edge (a)
(b) edge (a)
(b) edge (c)
(i) edge (h)
(h) edge (i)
(d) edge (f)
(g) edge (e)
(d) edge (i)
(i) edge (e)
(j) edge (b)
(d) edge (k)
;   
\end{tikzpicture}
\end{minipage}~~\begin{minipage}{0.3\textwidth}
\begin{tikzpicture}[scale=0.65]
\vertex (h) at (-5,1) [label=above:$a_1$] {};
\vertex (a) at (-4,1)  [label=above:$g(a')$]{};
\vertex (b) at (-3,1) [label=above:$v_1$]{};  
\vertex (c) at (-2,1) [label=above:$a'_1$] {};

\vertex (i) at (-5,0)  [label=left:$z$]{};

\vertex (d) at (-6,-1) [label=above:$v_2$] {};
\vertex (e) at (-4,-1) {};
\vertex (f) at (-7,-2)  [label=left:$g(a)$]{};
\vertex (g) at (-3,-2) {};
\vertex (j) at (-2,0)  {};
\vertex (k) at (-5,-2)  [label=left:$a_2$] {};

\path
(h) edge (a)
(b) edge (a)
(b) edge (c)
(i) edge (h)
(h) edge (i)
(d) edge (f)
(g) edge (e)
(d) edge (i)
(i) edge (e)
(j) edge (b)
(d) edge (k)
;   
\end{tikzpicture}
\end{minipage}
\caption {Mechanism $f$ violates ADR and TPAR w.r.t. profile $a' = (a'_1,a_2)$, but satisfies TPAR w.r.t. profile $ a$.
Mechanism $g$ violates 1-TSI and DB.
}
\label{fig2}
\end{figure}
In Fig.~\ref{fig2}, ADR is violated by mechanism $f$ for the pair of profiles $(a,a')$, which differ by the location of agent $1$, since $d(f(a),z) = 2$.

\begin{definition}[Almost Trajectory Contained] A mechanism $f$ is \textbf{almost trajectory contained}~(ATC) if it is ADR and 1-TSI.
\end{definition}

  Finally, we use the term TMON defined in Section~\ref{sec:cont trees} to describe mechanisms on discrete trees as well.
  
  We now go on to characterize SP and onto mechanisms on discrete trees.
  \begin{theorem}\label{thm}
An onto mechanism $f$ on the discrete tree is SP if and only if it is TMON and ATC.
\end{theorem}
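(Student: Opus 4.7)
The plan is to prove both directions of Theorem~\ref{thm} in close analogy with the continuous-case argument of Theorem~\ref{claim: TC theorem1}, with ATC (i.e., ADR together with $1$-TSI) playing the role that TC played there. The conceptual upgrade is that on a discrete tree a deviating agent can force the facility to jitter by one edge to a sibling vertex, whereas in the continuous case the facility had to stay exactly on her trajectory; ATC captures this discrete one-edge slack precisely.

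For the ``only if'' direction (SP and onto implies TMON and ATC), TMON follows immediately from SP by the same reasoning as Lemma~\ref{claim: TMON1}. For ATC, fix a profile $a$, an agent $i$, and an alternative $a'_i$ with $f(a)\neq f(a'_i,a_{-i})$, and let $z$ denote the median of $a_i$, $f(a)$, and $f(a'_i,a_{-i})$, i.e., the unique vertex lying on each of the three pairwise paths. Applying the two SP inequalities and expanding every distance through $z$ yields $d(z,f(a))=d(z,f(a'_i,a_{-i}))$ whenever both outcomes lie in the same component after removing the edges of $[a_i,a'_i]$, which is exactly the ADR hypothesis. To establish $1$-TSI I would assume toward contradiction that the subtrees differ while $d([a_i,a'_i],f(a))>1$; the segment $[f(a),f(a'_i,a_{-i})]$ must then cross $[a_i,a'_i]$ at two distinct projection vertices $w\neq w'$, and comparing the two SP inequalities with the explicit path decomposition through $w$ and $w'$ yields the contradiction. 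Finally, to upgrade the ADR equality from ``equal'' to ``equal to $1$'', I would argue by a minimal-counterexample scheme: assuming $d(z,f(a))=k\geq 2$ is minimal, let $u\in(z,f(a))$ be adjacent to $z$ and apply SP to the profile $(u,a_{-i})$; the new outcome is forced into the subtree beyond $u$ at distance at most $k-1$ from $u$, and combining this with the already-established $1$-TSI produces a smaller configuration that contradicts the minimality of $k$.

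For the ``if'' direction (TMON, ATC, and onto implies SP), suppose for contradiction that some agent strictly benefits: $d(a_i,f(a'_i,a_{-i}))<d(a_i,f(a))$. Let $z$ again be the median and split on the ATC hypothesis. If $tree(a_i\rightarrow a'_i,f(a))=tree(a_i\rightarrow a'_i,f(a'_i,a_{-i}))$, then ADR gives $d(z,f(a))=d(z,f(a'_i,a_{-i}))=1$ with $z$ on both $[a_i,f(a)]$ and $[a_i,f(a'_i,a_{-i})]$, so $d(a_i,f(a))=d(a_i,z)+1=d(a_i,f(a'_i,a_{-i}))$, contradicting the strict inequality. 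Otherwise, $1$-TSI forces $d([a_i,a'_i],f(a))\leq 1$, so the facility essentially sits on the agent's path; TMON applied to the shared segment $[x,y]\subseteq[a_i,a'_i]\cap[f(a),f(a'_i,a_{-i})]$ then forces the facility to move away from $a_i$ in lockstep with the agent, which after routine distance accounting for the at-most-$1$ deviation from the path again contradicts the strict preference.

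The main obstacle is the ADR portion of the ``only if'' direction, specifically ruling out common values $d(z,f(a))\geq 2$: the SP-plus-median decomposition provides the symmetric equality for free, but pinning the common value to exactly $1$ requires a genuinely new ingredient beyond the continuous toolkit (in the continuous analogue, uncompromisingness of the e.m.v.s. collapses this common value all the way to $0$, yielding TC; no such collapse is available on a discrete tree). A secondary subtlety is that the minimal-counterexample step must establish $1$-TSI first without relying on ADR, since otherwise the induction would be circular.
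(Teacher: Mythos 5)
There is a genuine gap in the ``only if'' direction, and it sits exactly where you suspected. First, your argument for $1$-TSI does not close: writing $w,w'$ for the projections of $f(a)$ and $f(a'_i,a_{-i})$ onto $[a_i,a'_i]$ and $D,D'$ for the corresponding depths, the two SP inequalities for the deviating agent decompose to $d(a_i,w)+D\leq d(a_i,w')+D'$ and $d(a_i,w')+D'\leq d(a_i,w)+D+2\bigl(d(a_i,w')-d(a_i,w)\bigr)$; adding and subtracting these yields only $w$ weakly closer to $a_i$ than $w'$ and $|D-D'|\leq d(w,w')$ --- i.e., precisely TMON and DB (the paper's Lemma~\ref{claim:proposal}), with no contradiction to $D\geq 2$. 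The same issue defeats your minimal-counterexample scheme for the ``$=1$'' part of ADR: a pair of profiles violating ADR or $1$-TSI need not itself violate SP (the paper stresses this in the caption of Fig.~\ref{fig3}), so no amount of applying SP to moves of agent $i$ alone --- including moving her to a vertex $u$ adjacent to $z$ --- can pin the common depth to $1$; in your step ``the new outcome is forced into the subtree beyond $u$ at distance at most $k-1$'', the SP inequalities in fact still permit the outcome to remain at $f(a)$ or $f(a')$.

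The missing ingredient is ontoness used through \emph{other agents}. The paper first proves the Tree Pareto property (TPAR, Lemma~\ref{claim:TP}) from unanimity, TMON and DB; TPAR guarantees that some other agent $j$ sits in the subtree hanging off the facility's branch. Both Lemma~\ref{1-DR} (ADR) and Lemma~\ref{claim:1-TSI} ($1$-TSI) then take a violating pair minimizing $\Sigma_{k\neq i}d(a_k,[a_i,a'_i])$, relocate that agent $j$ onto the branch vertex, and show every candidate outcome for the perturbed profiles either violates SP (now for agent $j$, whose distance to the facility \emph{can} be controlled) or yields a violating pair with strictly smaller total distance. Your reduction lemmas (moving $a'_i$ to the branch point $z$, and reducing $1$-TSI to one-step deviations) do match the paper's Lemmas~\ref{1-DR claim} and~\ref{onestep}, and your ``if'' direction is essentially the paper's Lemma~\ref{claim:easy} (ATC and TMON imply DB, hence SP). But without TPAR and the auxiliary-agent induction, the two hard necessity claims remain unproved.
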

  First we show a weak characterization. We then use it to prove the tree Pareto property and the main property of ``almost trajectory containment", which consists of two properties, ADR and 1-TSI. To prove ADR, we first show that when an agent moves towards the facility, either the facility remains in place (as in the continuous case), or it moves exactly one step towards the agent and one step away. To prove 1-TSI, we first show that when an agent moves to a neighboring vertex on edge $e$, the facility can intersect $e$ only if it is at most one step away from $e$.
  \paragraph{DB and TMON}

\begin{lemma}\label{claim:proposal}
 A pair of profiles violates SP if and only if it violates DB \textbf{or} TMON.
\end{lemma}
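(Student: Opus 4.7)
My plan is to reduce the lemma to a geometric case analysis on where the two outcomes sit relative to the agent's path $[a_i,a'_i]$. Let $p$ (resp.\ $q$) be the unique closest vertex on $[a_i,a'_i]$ to $f(a)$ (resp.\ $f(a')$), and write $d_1 = depth(a_i\to a'_i, f(a))$, $d_2 = depth(a_i\to a'_i, f(a'))$. Uniqueness of paths in the tree gives $d(a_i, f(a)) = d(a_i, p) + d_1$ and $d(a'_i, f(a)) = d(a'_i, p) + d_1$, with analogous formulas at $q$ for $f(a')$. Moreover $tree(a_i\to a'_i, f(a)) = tree(a_i\to a'_i, f(a'))$ iff $p = q$, and whenever $p\neq q$ the DB inequality reads $|d_1-d_2|\leq d(p,q)$. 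I would then split into three cases: (A) $p = q$; (B) $p\neq q$ with $p$ lying between $a_i$ and $q$ on $[a_i,a'_i]$; (C) $p\neq q$ with $q$ between $a_i$ and $p$.

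In Cases (A) and (B) the strategy is to check that TMON is automatic, so the lemma collapses to ``SP violated iff DB violated''. In Case (A) the path $[f(a),f(a')]$ meets $[a_i,a'_i]$ in at most the single point $p$, so TMON is vacuously satisfied; and the two SP inequalities (one per direction of deviation) simplify to $d_1\leq d_2$ and $d_2\leq d_1$, whose conjunction $d_1 = d_2$ is exactly DB. In Case (B) the shared segment is $[p,q]$, and on it both $a_i$ and $f(a)$ induce the same total order, with $p$ as the nearer endpoint, giving TMON; a direct computation then turns the two SP inequalities into $d_1 - d_2\leq d(p,q)$ and $d_2 - d_1\leq d(p,q)$, whose conjunction is again exactly DB.

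The main work---and the step I expect to require the most care---is Case (C), where I claim TMON and SP must fail simultaneously. TMON fails directly: on $[q,p]$, moving from $a_i$ toward $a'_i$ moves away from $q$, but the facility, sitting at depth $d_1$ above $p$, treats $p$ as the nearer endpoint, so $a_i$ and $f(a)$ induce opposite orders on $[q,p]$. I would then verify SP independently by expanding $d(a_i, f(a)) = d(a_i, q) + d(p,q) + d_1$ and $d(a_i, f(a')) = d(a_i, q) + d_2$ (with symmetric expressions at $a'_i$); the two SP inequalities then jointly demand $d_1 + d(p,q)\leq d_2$ and $d_2 + d(p,q)\leq d_1$, which are impossible for $d(p,q)\geq 1$. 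So in Case (C) both TMON and SP are always violated, matching the lemma. Combining the three cases yields the equivalence. The main bookkeeping hazards are verifying $[f(a),f(a')]\cap[a_i,a'_i] = [p,q]$ in Cases (B)/(C) from the uniqueness of tree paths, and checking that degenerate subcases (for instance $f(a)\in[a_i,a'_i]$, where $p=f(a)$ and $d_1=0$) are handled uniformly by the same formulas.
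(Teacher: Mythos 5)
Your proof is correct and takes essentially the same route as the paper's: both arguments rest on decomposing $d(a_i,f(\cdot))$ and $d(a'_i,f(\cdot))$ through the projections of the two outcomes onto $[a_i,a'_i]$ and then combining the two SP inequalities, with your $d(p,q)$ playing the role of $d(tree(a_i\rightarrow a'_i,f(a)),tree(a_i\rightarrow a'_i,f(a')))$ and your $d_1,d_2$ the depths. The only difference is organizational --- you split by the relative order of $p$ and $q$ on the agent's path, while the paper splits by implication direction and by which property fails; your Case~(C), where TMON and SP fail simultaneously, is exactly the paper's TMON step.
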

\begin{proof}
\noindent``$\Rightarrow$"  Every pair of profiles $a,a=(a_{-i},a'_i)$ which violates SP, violates DB or TMON.\newline
   Consider a mechanism $f$ and a pair of profiles $a,a=(a_{-i},a'_i)$
 s.t. $$x:=f(a); x':= f(a_{-i},a'_i)  ;\text{ and }  d(a_i, x) >  d(a_i, x') $$
      For any mechanism $f$ and profile $a$ it holds that
      \begin{align}
          d(a_i, x) =  d(a_i,  tree(a_i \rightarrow a'_i, x)) +  depth( a_i \rightarrow a'_i,x)\label{eq:borg}
      \end{align}
 Assume that $f$ is TMON. Then it follows that:
 \begin{align}
  d(a_i, x') &=  d(a_i,  tree(a_i \rightarrow a'_i, x))\label{eq:mon2}\\ &+ d(tree(a_i \rightarrow a'_i, x),  tree(a_i \rightarrow a'_i, x'))\notag\\ &+ depth( a_i \rightarrow a'_i,x')  \notag
 \end{align}
 And thus from Eqs.~\eqref{eq:borg},~\eqref{eq:mon2} above,
$$d(tree(a_i \rightarrow a'_i, x),  tree(a_i \rightarrow a'_i, x')) + depth( a_i \rightarrow a'_i,x') <  depth( a_i \rightarrow a'_i,x)$$
And therefore,
$$
d(tree(a_i \rightarrow a'_i, x),  tree(a_i \rightarrow a'_i, x'))<  depth( a_i \rightarrow a'_i,x) - depth( a_i \rightarrow a'_i,x'),$$
contradicting DB.

\noindent``$\Leftarrow$" Every pair of profiles $a,a'=(a_{-i},a'_i)$ which violates TMON or DB, violates SP.
  \begin{description}
\item{TMON.} Consider a mechanism $f$ and a pair of profiles $a,a=(a_{-i},a'_i)$ that violates TMON, i.e.,
$\exists i \in N,a, a'_i$   s.t.
$$x:=f(a) \in  \textit{tree}_k(a_i \rightarrow a'_i); x':= f(a_{-i},a'_i) \in  \textit{tree}_j(a_i \rightarrow a'_i); \text{ and }  j<k$$ It follows that
\begin{align}
    d(a_i, x') =&  d(a_i,  tree(a_i \rightarrow a'_i, x')) +  depth( a_i \rightarrow a'_i,x') \label{eq:DR1}\\
    d(a_i, x) =&  d(a_i,  tree(a_i \rightarrow a'_i, x'))\label{eq:DR2} \\
    &+  d(tree(a_i \rightarrow a'_i, x'),  tree(a_i \rightarrow a'_i, x)) \notag\\
    &+  depth( a_i \rightarrow a'_i,x).\notag
\end{align}
Assume by contradiction that $f$ is SP. Then it follows that $\forall i,a, a'_i, d(a_i, x) \leq  d(a_i, x')$, and thus from Eqs.~\eqref{eq:DR1},\eqref{eq:DR2} above,
$$d(tree(a_i \rightarrow a'_i, x'),  tree(a_i \rightarrow a'_i, x)) +  depth( a_i \rightarrow a'_i,x) \leq  depth( a_i \rightarrow a'_i,x').$$
From the contradiction assumption $x$ and $x'$ are in different trees. Therefore,
\begin{equation}
    depth( a_i \rightarrow a'_i,x) <  depth( a_i \rightarrow a'_i,x').\label{eq:DR5}
\end{equation}
Symmetrically,  $\forall i,a, a'_i:  d(a'_i, x) \geq d(a'_i, x')$ from which we derive
$$depth(a_i \rightarrow a'_i,x) >  depth( a_i \rightarrow a'_i,x'),$$ contradicting Eq.~\eqref{eq:DR5} above.
Therefore the pair $a,a'$ violates SP.
\item{DB.} Consider an SP mechanism $f$ and a pair of profiles $a,a=(a_{-i},a'_i)$. For any mechanism $f$ and profile $a$ it holds that
\begin{align}
d(a_i, x) =  d(a_i, tree(a_i \rightarrow a'_i, x)) +  depth( a_i \rightarrow a'_i,x).\label{eq:dists}
\end{align}
From TMON:
\begin{align}
    d(a_i, x') =&  d(a_i,  tree(a_i \rightarrow a'_i, x))\label{eq:monb}\\ &+   d(tree(a_i \rightarrow a'_i, x), tree(a_i \rightarrow a'_i, x'))\notag \\&+ depth( a_i \rightarrow a'_i,x').\notag
\end{align}
From strategyproofness, $\forall i,a, a'_i, d(a_i, x) \leq  d(a_i, x')$, and thus from Eqs.~\eqref{eq:dists},\eqref{eq:monb} above,
 $$depth( a_i \rightarrow a'_i,x) \leq  d(tree(a_i \rightarrow a'_i, x),  tree(a_i \rightarrow a'_i, x')) +  depth( a_i \rightarrow a'_i,x')$$
Therefore,
\begin{align}
d(tree(a_i \rightarrow a'_i, x),  tree(a_i \rightarrow a'_i, x')) \geq  depth( a_i \rightarrow a'_i,x) -  depth( a_i \rightarrow a'_i,x')\label{eq:dr1}
\end{align}
    Symmetrically, from strategyproofness, $\forall i,a, a'_i:  d(a'_i, x) \geq  d(a'_i, x') $ from which we derive
    \begin{align}
    d(tree(a_i \rightarrow a'_i, x), tree(a_i \rightarrow a'_i, x')) \geq  depth( a_i \rightarrow a'_i,x') -  depth( a_i \rightarrow a'_i,x)\label{eq:dr2}
    \end{align}
    And thus, from Eqs.~\eqref{eq:dr1},\eqref{eq:dr2} above,
    $$d(tree(a_i \rightarrow a'_i, x),  tree(a_i \rightarrow a'_i, x')) \geq |depth( a_i \rightarrow a'_i,x') -  depth(a_i \rightarrow a'_i,x)|$$
    Therefore, any pair of profiles which violates DB and satisfies TMON, violates SP.
   \end{description}
\end{proof}  

Lemma~\ref{claim:proposal} characterizes all pairs of profiles that violate SP. We later show a stronger characterization that describes the pairs that indicate that an onto mechanism is not SP. These pairs differ only in a single agent's location, who does not necessarily benefit from misreporting, contrary to the characterization in Lemma~\ref{claim:proposal}.

  \paragraph{TPAR}

\begin{lemma}\label{claim:TP}
Every SP and onto mechanism $f$ on the discrete tree is TPAR.
\end{lemma}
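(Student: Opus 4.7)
Proceed by contradiction: suppose $x := f(a)$ violates TPAR, so $x$ is not any $a_i$ and $d(x, Int(a)) \geq 2$. Let $T(a)$ denote the smallest subtree of the underlying tree containing all agent locations; let $z$ be the unique vertex of $T(a)$ closest to $x$; let $p$ be the neighbor of $x$ on the path $[x,z]$; and let $q^\ast$ be the neighbor of $p$ on $[p,z]$, understood as $p$'s unique neighbor within $T(a)$ in the degenerate case $p = z$ (which under the TPAR violation can only occur when $p = a_i$ for some agent $i$ that is a leaf of $T(a)$). Since $T$ is a tree and $x$ lies outside $T(a)$, we obtain the two arithmetic identities that drive the rest of the argument: $d(a_j, x) = d(a_j, p) + 1$ for every agent $a_j$, and $d(a_j, q^\ast) = d(a_j, p) - 1$ for every agent $a_j \neq p$. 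Both follow because the unique path from any such $a_j$ to $x$ or $p$ must enter through the edges $\{p,x\}$ or $\{q^\ast, p\}$ respectively.

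\textbf{Moving all agents to $p$.} Reorder the agents so that $a_1 = \cdots = a_m = p$ (with $m = 0$ if no agent sits at $p$) and $a_{m+1}, \ldots, a_n \neq p$. For $k = 0, \ldots, n$, let $a^{(k)} := (p, \ldots, p, a_{k+1}, \ldots, a_n)$ and $x_k := f(a^{(k)})$. Then $x_0 = x$, $x_k = x$ for $k \leq m$, and Lemma~\ref{claim:unanimous} gives $x_n = p$. For every $k \geq m+1$ the transition between $a^{(k-1)}$ and $a^{(k)}$ involves only agent $k$ reporting either $a_k$ or $p$, and applying SP in both directions yields
\[
d(a_k, x_{k-1}) \leq d(a_k, x_k), \qquad d(p, x_k) \leq d(p, x_{k-1}).
\]
Iterating the second inequality, $d(p, x_k) \leq 1$ throughout the sequence, so every $x_k$ is either $p$ or a neighbor of $p$.

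\textbf{Contradiction.} Let $k^\ast$ be the smallest index with $x_{k^\ast} = p$. Since $x_m = x$ and the SP inequality at step $m+1$ would require $d(a_{m+1}, p) + 1 \leq d(a_{m+1}, p)$ were $x_{m+1} = p$, we have $k^\ast \geq m+2$; then $x_{k^\ast - 1}$ is a proper neighbor of $p$, and the SP constraint $d(a_{k^\ast}, x_{k^\ast-1}) \leq d(a_{k^\ast}, p)$ uniquely singles out the neighbor of $p$ on $[a_{k^\ast}, p]$, which is $q^\ast$. A reverse induction on $j$ from $k^\ast - 1$ down to $m+1$ propagates $x_j = q^\ast$: given $x_{j+1} = q^\ast$, SP at step $j+1$ forces $d(a_{j+1}, x_j) \leq d(a_{j+1}, q^\ast) = d(a_{j+1}, p) - 1$ and $d(p, x_j) = 1$, and $q^\ast$ is the unique neighbor of $p$ meeting both constraints. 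Specializing to $j = m+1$, SP at step $m+1$ now demands $d(a_{m+1}, x) \leq d(a_{m+1}, q^\ast)$, i.e.\ $d(a_{m+1}, p) + 1 \leq d(a_{m+1}, p) - 1$, a contradiction. The main obstacle in this plan is the reverse-induction step: it relies on the fact that every remaining agent $a_{m+1}, \ldots, a_n$ shares the same entry-neighbor $q^\ast$ into $p$, a property that ultimately traces back to $T(a)$ being connected and attached to $p$ along a single direction.
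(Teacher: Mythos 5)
Your proof is correct and follows essentially the same strategy as the paper's: identify the neighbor $p$ of $f(a)$ on the path toward the agents (the paper's $v$), use the TPAR violation to see that all agents attach to $p$ through a single further neighbor $q^\ast$ (the paper's $u$), migrate the agents one at a time to $p$, and contradict unanimity at the profile $(p,\ldots,p)$. The only real difference is that you extract the invariant (the facility stays pinned at a neighbor of $p$, in fact at $q^\ast$) directly from the two SP inequalities, whereas the paper invokes the TMON and DB properties established in Lemma~\ref{claim:proposal}; both routes are sound.
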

\begin{proof}
Assume by contradiction that an onto, SP mechanism $f$ is not TPAR. Then there exists a profile $a$ s.t. $d(f(a),Int(a)) > 1$. Let $v$ denote the first vertex on the path from $f(a)$ to all other locations of $a$. Let $z^i := (a_1 \ldots , a_i, v,\ldots,v)$. Note that $z^n$ is the profile $a$ and $z^0$ is the profile $(v,\ldots v)$. By definition of $v$, we have that $f(a) = f(z^n) \in tree(a_i \rightarrow v, v)$.
 Since TPAR is violated, $v$ has exactly one neighbor $u$ on the path to any other agent location $a_i \neq v$. When moving from profile $z^{i+1}$ to $z^{i}$, it follows from TMON that $\forall 0\leq i \leq n-1$, 
 \begin{align}
     f(z^i) \in tree(a_i \rightarrow v, f(z^{i+1})) = tree(a_i \rightarrow v, v)\label{same tree}
 \end{align}
 From DB and~\eqref{same tree} we have that
\begin{align}
    depth( a_i \rightarrow v,f(z^i)) = 1 \neq 0 =  depth( a_i \rightarrow v,v)\label{depth}
\end{align}
 It follows from~\eqref{depth} that
 $$f(z^0) = f(v,..,v) \neq v$$ in contradiction to unanimity.
\end{proof}

  \paragraph{ADR}
Here we prove that ADR is a necessary condition for an onto, SP mechanism. We first show that if a pair of profiles violates the property, there exists a pair of profiles that violates the property in which the agent moves to a vertex on the path between the two outcomes.
\begin{lemma}\label{1-DR claim}
If an onto, SP mechanism $f$ violates ADR, there exists
a pair of violating profiles $a,a'=(a_{-i},a'_i)$ s.t. $$z= [a_i,f(a)] \cap [a_i, f(a')]\cap [f(a),f(a')] = a'_i$$ and  $$d(f(a),z)= d(f(a'),z)$$
\end{lemma}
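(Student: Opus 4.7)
The approach rests on a single clever deviation: moving agent $i$ directly to the Steiner point. Let $(a, a')$ be a pair violating ADR, set $x := f(a)$, $x' := f(a')$, and write $q := d(x, z)$, $q' := d(x', z)$. The subtree-equality hypothesis forces $x$ and $x'$ to lie in a common branch off the path $[a_i, a'_i]$, rooted at some $r$; a short geometric check then shows that $z$ lies in this branch and on all four paths $[a_i, x], [a_i, x'], [a'_i, x], [a'_i, x']$, as well as on $[x, x']$.

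The crux is the Steiner-point deviation profile $\hat a := (a_{-i}, z)$ and its outcome $y := f(\hat a)$. Applying SP to the pair $(a, \hat a)$ together with the triangle inequality, and using $d(a_i, x) = d(a_i, z) + q$, pins down $d(z, y) = q$ and forces $z \in [a_i, y]$. Symmetrically, SP on $(a', \hat a)$ pins down $d(z, y) = q'$ and $z \in [a'_i, y]$. In the nondegenerate regime $a_i \neq z$ and $a'_i \neq z$, combining the two equalities gives $q = q'$, and places $y$ at distance $q$ from $z$ in some branch off $z$ that does not re-enter the path $[a_i, a'_i]$.

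I then take the new violating pair to be either $(a, \hat a)$ or $(a', \hat a)$, the choice depending on which branch off $z$ the outcome $y$ inhabits: if $y$ lies in $x'$'s branch off $z$ I pick $(a, \hat a)$, otherwise I pick $(a', \hat a)$. In the chosen pair the Steiner point of the two outcomes is $z$, coinciding with the deviating vertex $\hat a_i = z$; the two outcomes lie at equal distance $q$ from $z$; the subtree-equality premise is satisfied (both outcomes lie in branches off $z$ that remain connected to $z$ after removing the modified path); and ADR is still violated, because $q \neq 1$ carries over from the original violation (under $a_i, a'_i \neq z$ the original violation must in fact be of the form $q = q' \neq 1$, since the above SP argument already forces $q = q'$). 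The main obstacle is this case analysis: one must check, depending on whether $y$ coincides with $x$ or $x'$ or lies in a third branch, that the chosen candidate indeed has its Steiner point at $z$ and meets the subtree premise. The degenerate situations $a_i = z$ or $a'_i = z$, in which the Steiner deviation collapses to $a$ or $a'$, are handled either by a symmetric relabeling (when one endpoint already is the Steiner point) or by a short one-step deviation argument invoking TPAR (Lemma~\ref{claim:TP}) and the uncompromising-like consequences of SP to equalize the depths directly.
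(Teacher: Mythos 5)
Your proposal is correct and follows essentially the same route as the paper: both proofs hinge on deviating agent $i$ to the Steiner point $z$ and then selecting whichever of the two resulting sub-pairs $(a,(a_{-i},z))$ or $(a',(a_{-i},z))$ still exhibits the violation. The only real difference is the middle step, where the paper invokes the TMON/DB machinery of Lemma~\ref{claim:proposal} to constrain $f(a_{-i},z)$, while you inline a direct SP-plus-triangle-inequality argument that additionally pins down $d(z,f(a_{-i},z))=d(z,f(a))=d(z,f(a'))$ exactly, making the final case split slightly cleaner.
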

\begin{proof}
 Assume by contradiction that there is a violating pair $(a,a')$ for which $f(a)$ and $f(a')$ are different vertices in the same tree w.r.t. the move $a_i \rightarrow a'_i$ and $a'_i \neq z$. Consider the profile $c$ where $c_{-i}=a_{-i}$ and $c_i=z$. If $f(c) = f(a)$ or $f(c) = f(a')$ the proof follows.
 
 Assume that $f(c) \neq f(a)$ and $f(c)\neq f(a')$. Then it follows from TMON that
 \begin{align}
     f(c) \in  tree (a_i \rightarrow a'_i,f(a)) ;\text{ and } tree (a_i \rightarrow a'_i,f(a))= tree(a_i \rightarrow a'_i,f(a'))\notag
 \end{align}
  Otherwise, if the tree that contains $f(c)$ is closer to $a_i$ than the tree that contains $f(a)$ and $f(a')$, mechanism $f$ is not TMON w.r.t. the pair of profiles $(a,c)$. If the tree that contains $f(c)$ is closer to $a'_i$, $f$ is not TMON w.r.t. the pair of profiles $(a',c)$. For the same reason, $f(c)$ belongs to the subtree of $z$, i.e. $d(f(c),[a_i,a'_i]) > d(z,[a_i,a'_i])$.
 
 Let $y$ denote the unique point s.t. 
 $y= [a_i,f(a)] \cap [a_i, f(c)]\cap [f(a),f(c)]$ and $y'$ the unique point s.t. $y'= [a_i,f(a')] \cap [a_i, f(c)]\cap [f(a'),f(c)]$ (see example in Fig.~\ref{fig3}).
 From DB, we have that $d(f(a),y)= d(f(c),y)$ and  $d(f(a'),y')= d(f(c),y')$.
 
 If $d(f(c),y)= d(f(a),y) = 1$ and $d(f(c),y')= d(f(a'),y') = 1$, it follows that $y$ and $y'$ are the same vertex, in contradiction to the assumption that the pair $(a,a')$ violates ADR. If $d(f(a),y) > 1$, the proof follows for the pair $(a,c)$. Otherwise, $d(f(a'),y') > 1$ and the proof follows for the pair $(a',c)$. 
 \end{proof}

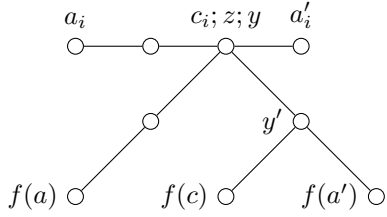
\begin{figure}[t]
 \centering
 \begin{tikzpicture}
\vertex (u) at (1,1)  {};
 \vertex (a) at (0,1) [label=above:$a_i$] {};
 \vertex (b) at (3,1)  [label=above:$a'_i$] {};  
\vertex (c) at (2,1)  [label=above:$c_i;z;y$] {};
\vertex (d) at (1,0)  {};
\vertex (e) at (3,0) [label=left:$y'$]  {};
\vertex (f) at (0,-1) [label=left:$f(a)$] {};
\vertex (h) at (4,-1) [label=left:$f(a')$] {};
\vertex (i) at (2,-1) [label=left:$f(c)$] {};
\path
(a) edge (u)
(b) edge (c)
(c) edge (u)
(c) edge (d)
(c) edge (e)
(d) edge (f)
(e) edge (h)
(i) edge (e)

;   
\end{tikzpicture}
    \caption{An illustration of the proof of Lemma~\ref{1-DR claim}. Here the  violating pair of profiles is $(a,c)$. Note that this pair of profiles does not violate SP.}
    \label{fig3}
\end{figure}

The following lemma proves the necessity of ADR by showing that there is no violating pair of profiles in which the agent moves to a location on the path between the outcomes.
\begin{lemma}\label{1-DR}
Every onto SP mechanism on the discrete tree is ADR.
\end{lemma}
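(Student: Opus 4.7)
The plan is a minimum-counterexample argument on the quantity $d := d(f(a), z) = d(f(a'), z)$. Suppose for contradiction that $f$ is onto and SP but violates ADR. By Lemma~\ref{1-DR claim}, one obtains a violating pair $(a, a')$ with $a'_i = z$ serving as pivot, $w := f(a)$ and $w' := f(a')$ in distinct branches $u_a, u_{a'}$ of $z$ at common distance $d$, and with $tree(a_i \to z, w) = tree(a_i \to z, w')$. Since ADR is vacuous at $d = 1$, we have $d \geq 2$. I would fix such a violating pair minimizing $d$. A preliminary observation I will use repeatedly is the anchoring $f(a_{-i}, w) = w$ and $f(a_{-i}, w') = w'$: an agent at $w$ in the profile $(a_{-i}, w)$ can deviate to $a_i$ and obtain outcome $w$, so by SP her truthful outcome sits at distance $0$ from $w$ and hence equals $w$; the second identity is symmetric.

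Next I introduce the auxiliary profile $c := (a_{-i}, u_a)$, where $u_a$ is the neighbor of $z$ on $[z, w]$, and I pin down $f(c)$. Applying SP to the deviation $a' \to c$ yields $d(z, f(c)) \geq d(z, f(a')) = d$, while applying SP to the deviation $c \to (a_{-i}, w)$, combined with the above anchoring, yields $d(u_a, f(c)) \leq d(u_a, w) = d - 1$. Together with $d(z, u_a) = 1$, the triangle inequality forces $d(u_a, f(c)) = d - 1$ and $d(z, f(c)) = d$ exactly, with $u_a$ on the path $[z, f(c)]$. Thus $f(c)$ sits at depth $d - 1$ in the $u_a$-subtree past $u_a$.

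Third, I feed the pair $(a, c)$ into the ADR test. Agent $i$ moves from $a_i$ to $u_a$ along a path through $z$, and both $tree(a_i \to u_a, w)$ and $tree(a_i \to u_a, f(c))$ coincide with the $u_a$-subtree past $u_a$; hence whenever $f(c) \neq w$, the tree-equality precondition of ADR for $(a, c)$ is met. Let $y$ be the corresponding ADR meeting point --- the divergence vertex of the paths $[u_a, w]$ and $[u_a, f(c)]$. Since both $w$ and $f(c)$ sit at depth $d - 1$ from $u_a$, symmetry gives $d(w, y) = d(f(c), y) =: \ell$ with $\ell = (d - 1) - d(u_a, y)$. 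Whenever $\ell \geq 2$, the pair $(a, c)$ is itself an ADR violation, and one further application of Lemma~\ref{1-DR claim} produces a violating pair satisfying that lemma's conclusion with common distance at most $\ell \leq d - 1 < d$, contradicting the minimality of $d$.

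The main obstacle is the degenerate remainder: $f(c) = w$, or $f(c) \neq w$ but $\ell = 1$ (so $w$ and $f(c)$ are siblings under a common parent at depth $d - 2$ from $u_a$, and $(a, c)$ actually satisfies ADR). I would dispatch these by iterating along $[z, w]$, substituting $u_a$ by the successive vertices $v_2, v_3, \ldots, w$; the same triangle-inequality pinning gives $f(a_{-i}, v_k)$ at depth $d - k$, and analogous reasoning applies symmetrically along $[z, w']$. At each step, either the induced pair $(a, (a_{-i}, v_k))$ is an ADR violation at distance strictly less than $d$ (delivering the contradiction via minimality), or one accumulates enough structural information about the chain of outcomes to contradict TPAR (Lemma~\ref{claim:TP}) at some profile in the chain. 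Verifying that the iteration terminates and that every degenerate branch closes into either a strictly smaller ADR violation or a TPAR/unanimity contradiction will be the delicate technical heart of the proof.
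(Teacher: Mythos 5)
Your preparatory steps are sound: the anchoring $f(a_{-i},w)=w$, the triangle-inequality pinning of $f(a_{-i},u_a)$ at depth $d-1$ below $u_a$, and the observation that $\ell\geq 2$ yields a smaller violation are all correct. But the part you defer as ``the delicate technical heart'' is not a technicality --- it is where the approach breaks, because every profile in your chain has the form $(a_{-i},v)$: you only ever perturb agent $i$. Single-agent deviations cannot close the degenerate case. Concretely, define $h(v):=f(a_{-i},v)$ by $h(v)=w$ for $v\in[u_a,w]\cup\{a_i\}$ and $h(v)=w'$ for $v\in[z,w']$ (each report mapped to whichever of $w,w'$ is weakly closer, with the ties at $a_i$ and at $z$ broken inconsistently). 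This assignment is strategyproof for agent $i$, satisfies DB and TMON on every consecutive pair of your chain (e.g., for the move $z\to u_a$ the facility jumps distance $2d$, but the trees are at distance $1$ and the depths are $d$ and $d-1$, so DB holds), satisfies your pinning constraints, and satisfies TPAR at every profile in the chain via the very agent $j$ whose existence TPAR forces in the subtree near $w$ --- since $a_{-i}$ never changes, that witness never disappears. Yet it realizes exactly the ADR violation $f(a)=w$, $f(a')=w'$. So no iteration along $[z,w]$ or $[z,w']$, and no appeal to TPAR as a static property of the chain's profiles, can produce a contradiction.

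This is precisely why the paper's proof takes a different route: it uses TPAR to \emph{locate} a second agent $j$ with $a_j\in tree(a_i\to v_1,v_1)\setminus v_1$, then perturbs $j$ (moving her to $v_1$, the parent of $f(a)$), and runs the minimal-counterexample argument over $\sum_{k\neq i}d(a_k,[a_i,a'_i])$ rather than over $d$; the contradiction comes from agent $j$'s incentives across the profiles $b,b'$. Your induction on $d$ also has a secondary bookkeeping gap: when $1<\ell<d-1$ the pair $(a,c)$ has $c_i=u_a\neq y$, so it is not in the normalized form of Lemma~\ref{1-DR claim}, and that lemma as stated does not guarantee that re-normalizing preserves the bound $\ell<d$ on the common distance --- you would have to re-derive this from its proof. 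That issue is fixable; the reliance on agent-$i$-only deviations is not.
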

    \begin{proof}
    Assume that there exists an agent $i$ and two profiles $a, a'=(a'_i,a_{-i})$ s.t. $$x:=f(a); x':= f(a');\textit{ and } tree(a_i\rightarrow a'_i, x) =  tree(a_i\rightarrow a'_i, x')$$ 
    From DB, we have that $d(x,z)= d(x',z)$, 
    where $z$ is the unique point s.t. $z= [a_i,x] \cap [a_i, x']\cap [x,x']$.
    We show that ADR holds when $z=a'_i$. Assume by contradiction that 
    \begin{align}
        z=a'_i \textit{ and } d(x,z)= d(x',z) > 1\label{ADR cont} 
    \end{align}
    Among the violating pairs, let $(a,a'=(a_{-i},a'_i))$ be one of the pairs that minimize $\Sigma_{k\neq i}d(a_k,[a_i,a'_i])$. We let $v_1$ denote the first vertex on the path from $x$ to $z$ and $v_2$ denote the second vertex on the path from $x$ to $z$ (see Figs.~\ref{fig:adr}.1,~\ref{fig:adr}.2). Note that $v_2$ and $z$ is the same vertex if $d(x,z)=$ $d(x',z) = 2$.
   
      Since $f$ is TPAR, there must be some other agent $j$ s.t. $x$ is sufficiently close to the path from agent $j$ to agent $i$. Formally,
      $$\exists j: a_j \in tree(a_i \rightarrow v_1,v_1)\setminus v_1$$
       (see locations $a_{j_1},a_{j_2},a_{j_3}$ in  Fig.~\ref{fig:adr}.1). We define two profiles $b=(a_{-j},b_j=v_1)$ and $b'=(a'_{-j},b'_j=v_1)$, which differ from $a$ and $a'$ only by the location of agent $j$. Let $y$ denote $f(b)$ and $y'$ denote $f(b')$.

        For the pair of profiles $(a,b)$: If $x$ is on the path from $a_j$ to $v_1$ (location $a_{j_3}$ in Fig.~\ref{fig:adr}.1), it follows from DB and TMON that the facility will stay in the same tree w.r.t. the move $a_j \rightarrow v_1$ at depth 0 (location $y_3$ in Fig.~\ref{fig:adr}.3) or move to $tree(a_j \rightarrow v_1,v_1)$ and be located at depth 0 (location $y_2$ in Fig.~\ref{fig:adr}.3) or 1 (locations $y_1,y_4$ in Fig.~\ref{fig:adr}.3). Otherwise, if $x$ is not on the path from $a_j$ to $v_1$ (locations $a_{j_1}$, $a_{j_2}$ in Fig.~\ref{fig:adr}.1), $x$ will stay in the same tree at the same depth w.r.t. the move $a_j \rightarrow v_1$ (locations $y_1,y_3$ in Fig.~\ref{fig:adr}.3). Overall, the possible locations of $y$ are $x,v_1,v_2$ and the direct children of $v_1$. 
        
        Location $y'$ satisfies: 
            $d(y',v_1) \leq d(x',v_1)$.  Otherwise, SP is violated for the pair of profiles $(a',b')$. The location $y'$ also satisfies 
        \begin{align}
            y' \in tree(a_i\rightarrow a'_i, a'_i); \textit{ and } depth(a_i\rightarrow a'_i, y') = depth(a_i\rightarrow a'_i, y)\label{adr depth}
        \end{align}
        Otherwise, SP is violated for the pair of profiles $(b,b')$ (therefore $y'_1$ in Fig.~\ref{fig:adr}.4 is not a valid location). We divide into two cases by the possible locations of $y'$:
        \begin{enumerate}
         \item $y'$ is a child of $v_1$ in the $tree(a_i \rightarrow a'_i,v_1)$:
        In this case the pair $(a',b')$ is a violation of SP, since 
        \begin{align}
            d(a_j,y') \leq d(a_j,v_1)+d(v_1,y')\label{11}\\
            =d(a_j,v_1) + 1 \label{22}\\
            < d(a_j,v_1)+3  \notag\\
            \leq d(a_j,x')\label{33}
        \end{align}
        Eq.~\eqref{11} follows from the triangle inequality, Eq.~\eqref{22} follows from the case condition and Eq.~\eqref{33} follows from Eq.~\eqref{ADR cont}.
            \item $y'$ is not a child of $v_1$ in the $tree(a_i \rightarrow a'_i,v_1)$: 
            It follows from Eq.~\eqref{adr depth} and the case condition that 
            \begin{align}
                y' \notin tree(a_i \rightarrow a'_i,v_1)\setminus \{v_1\}\label{111}
            \end{align}
             Therefore, $v_1 \in [a_j, y']$. 
         From Eq.~\eqref{111},
         \begin{align}
             d(v_1,y')=d(v_1,x')\label{2222}
         \end{align}
         Otherwise, the pair $(a',b')$ violates SP, since the nearest location to $v_1$ between $y'$ and $x'$ is strictly closer to agent $j$ whether agent $j$ is located at $a_j$ or at $v_1$.
         From Eq.~\eqref{adr depth},
         \begin{align}
             depth(a_i \rightarrow a'_i, y') \in \{ depth(a_i \rightarrow a'_i, x), depth(a_i \rightarrow a'_i,v_1), depth(a_i \rightarrow a'_i, v_2) \}
         \end{align}
         We divide into two cases:
         \begin{enumerate}
             \item $depth(a_i \rightarrow a'_i, y') \in \{depth(a_i \rightarrow a'_i, v_1), depth(a_i \rightarrow a'_i, v_2)\}$ (locations $y'_4,y'_5$ in Fig.~\ref{fig:adr}.4):\newline
             \begin{align}
                 d(v_1,y') \leq d(v_1,z)+d(z,y')\label{333}\\
                  < d(v_1,z)+d(z,x')\label{444}\\
                   = d(v_1,x')\notag
             \end{align}
             contradicting Eq.~\eqref{2222}. Eq.~\eqref{333} follows from the triangle inequality and Eq.~\eqref{444} follows from the case condition.
               \item $depth(a_i \rightarrow a'_i, y')= depth(a_i \rightarrow a'_i, x)$ (locations $y'_2,y'_3$ in Fig.~\ref{fig:adr}.4):\newline
               Since $y'$ is not a direct child of $v_1$, the pair $(b,b')$ violates ADR in contradiction to the minimality of $\Sigma_{k\neq i}d(a_k,[a_i,a'_i])$, since agent $j$ is closer to the trajectory of agent $i$ in profiles $b,b'$ than in profiles $a,a'$. 
         \end{enumerate} 
        \end{enumerate}

      We have shown that there is no valid location for $y'$, and therefore ADR is not violated for the case in which $a'_i = z$. From Lemma \ref{1-DR claim}, every SP onto mechanism is ADR.
 \end{proof} 

 \begin{figure}[t]
   \centering
     \begin{tikzpicture}[scale=0.51]
\vertex (u) at (-1,1)  {};
 \vertex (a) at (-2,1)  {};
 \vertex (b) at (-3,1) [label=above:$a_i$] {};  
\vertex (c) at (0,1) [label=above:$z$] {};

\vertex (d) at (-1,0) [label=left:$v_2$] {};
\vertex (e) at (1,0)   {};
\vertex (f) at (-2,-1)  [label=left:$v_1$]{};
\vertex (g) at (0,-1)  {};
\vertex (h) at (2,-1)  {};

\vertex (i) at (-3,-2) [label=left:$a_{j_1};x$] {};
\vertex (j) at (-1,-2) [label=left:$a_{j_2}$] {};

\vertex (q) at (1,-2)  {};
\vertex (r) at (3,-2)  {};

\vertex (s) at (-4,-3) [label=left:$a_{j_3}$] {};
\vertex (qqqqq) at (-5,2) [fill,scale=0.001,label=right:$1)$ $a$]  {};

\vertex (qq) at (9,1)  {};
 \vertex (ww) at (8,1)  {};
 \vertex (ee) at (7,1)  {};  
\vertex (rr) at (10,1) [label=above:$a'_i;z$] {};

\vertex (tt) at (9,0) [label=left:$v_2$] {};
\vertex (yy) at (11,0)   {};
\vertex (uu) at (8,-1)  [label=left:$v_1$]{};
\vertex (ii) at (10,-1)  {};
\vertex (oo) at (12,-1)  {};

\vertex (pp) at (7,-2) [label=left:$a'_{j_1}$] {};
\vertex (qqq) at (9,-2) [label=left:$a'_{j_2}$] {};

\vertex (www) at (11,-2)  {};
\vertex (eee) at (13,-2)[label=right:$x'$]  {};

\vertex (rrr) at (6,-3) [label=left:$a'_{j_3}$] {};
\vertex (wwwww) at (5,2) [fill,scale=0.001,label=right:$2)$ $a'$]  {};

\vertex (aa) at (-1,-6)  {};
 \vertex (ss) at (-2,-6)  {};
 \vertex (dd) at (-3,-6) [label=above:$b_i$] {};  
\vertex (ff) at (0,-6)[label=above:$z$]   {};

\vertex (gg) at (-1,-7) [label=left:$v_2;y_1$] {};
\vertex (hh) at (1,-7)   {};
\vertex (jj) at (-2,-8)  [label=left:$v_1;y_2;b_j$]{};
\vertex (kk) at (1,-8)  {};
\vertex (ll) at (2,-8)  {};

\vertex (aaa) at (-3,-9) [label=left:$x;y_3$] {};
\vertex (sss) at (-1,-9) [label=left:$y_4$] {};

\vertex (ddd) at (1,-9)  {};
\vertex (fff) at (3,-9)[label=right:$x'$]  {};

\vertex (ggg) at (-4,-10)  {};
\vertex (eeeee) at (-5,-5) [fill,scale=0.001,label=right:$3)$ $b$]  {};

\vertex (zz) at (9,-6)  {};
 \vertex (xx) at (8,-6)  {};
 \vertex (cc) at (7,-6) [label=above:$y'_1$] {};  
\vertex (vv) at (10,-6) [label=above:$b'_i;z$] {};

\vertex (bb) at (9,-7) [label=left:$v_2$] {};
\vertex (nn) at (11,-7)   {};
\vertex (mm) at (8,-8)  [label=left:$v_1;b'_j$]{};
\vertex (zzz) at (10,-8)  {};
\vertex (xxx) at (12,-8)  {};

\vertex (ccc) at (7,-9)  {};
\vertex (vvv) at (9,-9) {};

\vertex (bbb) at (11,-9) [label=right:$y'_2$] {};
\vertex (nnn) at (13,-9)[label=right:$y'_3$]  {};

\vertex (mmm) at (6,-10)  {};
\vertex (rrrr) at (5,-5) [fill,scale=0.001,label=right:$4)$ $b'$]  {};

\path

(b) edge (a)
(a) edge (u)
(u) edge (c)
(c) edge (d)
(c) edge (e)
(d) edge (f)
(e) edge (g)
(e) edge (h)
(f) edge (j)
(f) edge (i)
(q) edge (h)
(h) edge (r)
(i) edge (s)

(ww) edge (ee)
(qq) edge (ww)
(qq) edge (rr)
(rr) edge (tt)
(rr) edge (yy)
(tt) edge (uu)
(yy) edge (ii)
(yy) edge (oo)
(uu) edge (pp)
(oo) edge (www)
(pp) edge (rrr)
(uu) edge (qqq)
(eee) edge (oo)

(ss) edge (dd)
(aa) edge (ss)
(aa) edge (ff)
(ff) edge (gg)
(ff) edge (hh)
(gg) edge (jj)
(hh) edge (kk)
(hh) edge (ll)
(jj) edge (aaa)
(ll) edge (ddd)
(aaa) edge (ggg)
(jj) edge (sss)
(fff) edge (ll)

(xx) edge (cc)
(zz) edge (xx)
(zz) edge (vv)
(vv) edge (bb)
(vv) edge (nn)
(bb) edge (mm)
(nn) edge (zzz)
(nn) edge (xxx)
(mm) edge (ccc)
(xxx) edge (bbb)
(ccc) edge (mmm)
(mm) edge (vvv)
(nnn) edge (xxx)
;   
\end{tikzpicture}
     \caption{ An illustration of the possible locations of agent $j$ and the facility locations for the profiles $a,a',b,b'$ in the proof of Lemma~\ref{1-DR}. }
     \label{fig:adr}
 \end{figure}
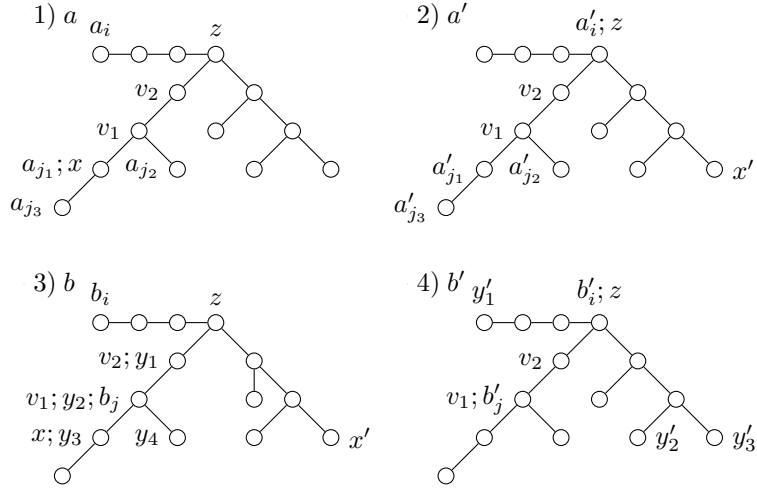
   
  \paragraph{1-TSI}
Here we prove that 1-TSI is a necessary condition for an onto, SP mechanism.
We first show that if there is a violation of the 1-TSI property, then w.l.o.g. it occurs when an agent moves a single step.
\begin{lemma}\label{onestep}
If an onto, SP mechanism $f$ violates 1-TSI, there exists
a pair of violating profiles $a,a'=(a_{-i},a'_i)$ s.t. $d(a_i,a'_i)=1$.
\end{lemma}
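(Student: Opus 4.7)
The plan is a minimum-counterexample argument. Suppose for contradiction that some pair violates 1-TSI but no distance-$1$ pair does, and pick a violating pair $(a, a')$ with $k := d(a_i, a'_i) \ge 2$ minimal. Write $v_0 = a_i, v_1, \ldots, v_k = a'_i$ for the path between the two locations, and let $v_j$, $v_{j'}$ denote the path-vertices at depth $0$ in $tree(v_0 \rightarrow v_k, f(a))$ and $tree(v_0 \rightarrow v_k, f(a'))$ respectively; since the trees differ, $j \neq j'$. I then apply TMON (Lemma~\ref{claim:proposal}) to $(a, a')$ on the overlap segment $[v_j, v_{j'}] \subseteq [v_0, v_k] \cap [f(a), f(a')]$ to force the orientation $j < j'$; combined with $d(f(a), [v_0, v_k]) > 1$, this yields the depth bound $d(f(a), v_j) \ge 2$.

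Next I introduce the intermediate profile $c := (a_{-i}, v_1)$ and set $f_c := f(c)$. The sub-pairs $(a, c)$ and $(c, a')$ have distances $1$ and $k-1$, so by our hypothesis and the minimality of $k$ neither can be a violating pair of 1-TSI. Since $d([v_0, v_1], f(a)) \ge d([v_0, v_k], f(a)) > 1$, non-violation of $(a, c)$ reduces to the tree equality $tree(v_0 \rightarrow v_1, f(a)) = tree(v_0 \rightarrow v_1, f_c)$, so $f_c$ lies on the same side of the edge $\{v_0, v_1\}$ as $f(a)$. Non-violation of $(c, a')$ similarly gives either the same-side condition $tree(v_1 \rightarrow v_k, f_c) = tree(v_1 \rightarrow v_k, f(a'))$ or the ``distance escape'' $d([v_1, v_k], f_c) \le 1$.

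The final step is a case analysis. In the clean case (no distance escape), the two tree-equality conclusions pin $f_c$ simultaneously into the same tree along $[v_0, v_k]$ as $f(a)$ (from the $(a, c)$ constraint) and as $f(a')$ (from the $(c, a')$ constraint), contradicting $j \ne j'$. In the distance-escape case I combine the SP inequalities $d(v_0, f(a)) \le d(v_0, f_c)$ and $d(v_1, f_c) \le d(v_1, f(a'))$ with $d(f(a), v_j) \ge 2$ to push $f_c$ away from $[v_1, v_k]$; residual pathological configurations are handled either by passing to the symmetric intermediate profile $(a_{-i}, v_{k-1})$ and rerunning the reduction, or by invoking ADR (Lemma~\ref{1-DR}) and TPAR (Lemma~\ref{claim:TP}) to exhibit an explicit distance-$1$ violating pair elsewhere. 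I expect this distance-escape case to be the main obstacle: the SP inequalities alone provide only loose localization of $f_c$, and a careful combination of TMON (forcing $f_c$ to move ``along'' the agent's trajectory) with ADR (pinning $f_c$ to a sibling configuration of $f(a)$ or $f(a')$ in the induced partition) will be required to close it.
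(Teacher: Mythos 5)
Your overall strategy (decompose the agent's move into single steps and exploit the non-violation of the sub-moves) is in the same spirit as the paper's, but as written the argument has two genuine gaps. First, the ``clean case'' contradiction does not actually follow. From the non-violation of $(a,c)$ you only learn that $f_c$ lies on the same side of the \emph{single edge} $\{v_0,v_1\}$ as $f(a)$; this is a statement about the two components of $T$ minus one edge, which is far coarser than the partition induced by removing all edges of $[v_0,v_k]$. Nothing in your setup forces $j=0$ (TMON only gives $j<j'$), and if $f(a)$ hangs off $v_j$ with $j\geq 1$, then ``same side of $\{v_0,v_1\}$ as $f(a)$'' is satisfied by every vertex of $tree(v_0\rightarrow v_k,v_{j'})$, so both tree-equality conclusions are simultaneously satisfiable by $f_c\in tree(v_1\rightarrow v_k,f(a'))$ and there is no contradiction with $j\neq j'$. (Even $j=0$, $j'=1$ is problematic, since $tree(v_1\rightarrow v_k,v_1)$ contains the whole $v_0$-side.) Closing this requires an extra argument on the pair $(a,c)$ itself --- e.g., that DB and ADR forbid a one-step move of the agent from sending a facility at depth $\geq 2$ off $v_j$ into a different branch $v_{j'}$ --- which you do not supply. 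Second, the ``distance-escape'' case $d([v_1,v_k],f_c)\leq 1$ is acknowledged but left open; ``rerunning the reduction from the other end'' or ``invoking ADR and TPAR to exhibit a violating pair elsewhere'' is a plan, not a proof, and it is exactly where the difficulty sits, because a facility that legitimately drifts toward the path defeats the induction hypothesis.

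For comparison, the paper avoids both problems by never recursing: it moves the agent one vertex at a time all the way from $a_i$ to $a'_i$ and isolates the \emph{first} consecutive pair $x,y$ at which the component of the facility \emph{with respect to the full path} $[a_i,a'_i]$ changes. DB forces the facility to retain its depth ($\geq 2$) up to that step, so the switch would require a jump of length at least $3$ across the path, whereas ADR caps any jump that stays on one side of the edge $\{x,y\}$ at $2$; hence the switch must change the tree w.r.t. $\{x,y\}$ itself, yielding a one-step 1-TSI violation. You would need either this ``first switching edge w.r.t. the full trajectory'' device or a complete treatment of your two open cases to turn the proposal into a proof.
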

\begin{proof}
 Assume that $d(a_i,a'_i)>1$ and assume w.l.o.g. that $$d(f(a),[a_i,a'_i]) > 1$$
We denote by $x,y$ the two adjacent vertices on the path from $a_i$ to $a'_i$ s.t.
\begin{align}
   x'=(a_{-i},x);y'=(a_{-i},y);\notag\textit{ and } \\tree(a_i \rightarrow a'_i,f(x')) \neq  tree(a_i \rightarrow a'_i,f(y'))\label{eq: onestep different}
\end{align}
If there is more than one such pair then we select the pair $x,y$ closest to $a_i$ (see Fig.~\ref{fig onestep}). From Lemma~\ref{claim:proposal}, every SP, onto mechanism satisfies DB. Therefore, it holds that for every move $a_i \rightarrow z$ where $z \in [a_i,x]$, the facility stays in the same depth w.r.t. its initial tree, i.e., 
\begin{align}
    depth(a_i \rightarrow z,f(a_{-i},z)) =  depth(a_i \rightarrow z,f(a))\label{eq: onestep samedepth}
\end{align}
Assume by contradiction that
\begin{align}
    tree(x \rightarrow y,f(x')) =  tree(x \rightarrow y,f(y'))\label{onestep same tree}
\end{align} 
From Eq.~\eqref{eq: onestep different},
when agent $i$ moves from $x$ to $y$, the trajectory of the facility intersects the segment $[a_i,a'_i]$ in two points, and since d $(f(a),[a_i,a'_i]) \geq 2$, it holds that
\begin{align}
    d(f(a), f(y')) \geq 3\label{eq:3}
\end{align}
On the other hand, from Eq.~\eqref{onestep same tree} and ADR we have that
\begin{align}
    d (f(a),f(y')) \in \{0,2\} \label{eq:02}
\end{align}
contradicting Eq.~\eqref{eq:3} (see locations $f(x'),f(y')$ in Fig.~\ref{fig onestep}). Thus, it follows that 
$tree(x \rightarrow y,f(x')) \neq  tree(x \rightarrow y,f(y'))$.
Therefore, the pair $(x',y')$ violates 1-TSI by a one-step deviation, since $d(f(x'),[a_i,a'_i])>1$, and in particular, $d(f(a),[x,y])>1$
as required.
\end{proof}
  \begin{figure}[t] 
  \centering
 \begin{tikzpicture}[scale=0.65]
\vertex (a) at (-1,1) [label=above:$a_i$] {};
 \vertex (b) at (0,1)  {};
  \vertex (k) at (1,1)  {};
 \vertex (c) at (2,1)  [label=above:$x$] {};  
\vertex (d) at (3,1)  [label=above:$y$] {};
\vertex (e) at (4,1)  [label=above:$a'_i$]  {};
\vertex (f) at (-2,0) {};
\vertex (g) at (-3,-1) [label=left:$f(a)$] {};
\vertex (h) at (2,-1) [label=right:$f(y')$] {};
\vertex (i) at (1,0) {};
\vertex (j) at (-1,-1) [label=left:$f(x')$] {};

\path

(b) edge (a)
(b) edge (k)
(k) edge (c)
(d) edge (e)
(c) edge (d)

(a) edge (f)
(f) edge (g)
(b) edge (i)
(i) edge (h)
(f) edge (j)

;   
\end{tikzpicture}
    \caption{An illustration of the proof of Lemma \ref{onestep}. The pair $(x'=(a_{-i},x),y'=(a_{-i},y))$ violates ADR.}
    \label{fig onestep}
\end{figure}
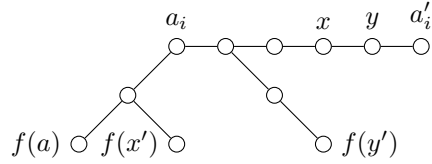
The following lemma proves the necessity of 1-TSI by showing that there is no violating pair of profiles in which the agent moves to a neighboring vertex.
\begin{lemma}\label{claim:1-TSI}
Every onto, SP mechanism on the discrete tree is 1-TSI.
\end{lemma}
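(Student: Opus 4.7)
The plan is to combine Lemma~\ref{onestep} with a minimal counterexample argument, modeled on (but extending) the proof of Lemma~\ref{1-DR}. By Lemma~\ref{onestep}, we may assume the violating pair $(a,a'=(a_{-i},a'_i))$ satisfies $d(a_i,a'_i)=1$, so $[a_i,a'_i]$ is a single edge $e$. Writing $x:=f(a)$ and $x':=f(a')$, the violation means $x,x'$ lie in different components of $T\setminus e$ while $d(x,e)\ge 2$; without loss of generality $x$ is on the $a_i$-side. Among all such violating profiles, I would pick one that minimizes $\sum_{k\neq i} d(a_k,e)$ (with a secondary minimization of $d(x,e)$), exactly as in the proof of Lemma~\ref{1-DR}, so that any ``smaller'' perturbation we construct must obey 1-TSI and ADR.

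Next, I invoke TPAR (Lemma~\ref{claim:TP}). Let $v_1$ denote the neighbor of $x$ on the path toward $a_i$; since $d(x,a_i)\ge 2$ we have $v_1\notin e$. For $x$ to be tree-Pareto, there must exist some agent $j\neq i$ whose location lies in $tree(a_i\rightarrow v_1,v_1)$ (at $v_1$, at $x$, or strictly beyond); otherwise $Int(a)$ cannot reach within distance $1$ of $x$. I then define the auxiliary profiles $b=(a_{-j},v_1)$ and $b'=(a'_{-j},v_1)$, and set $y:=f(b)$, $y':=f(b')$. Strategyproofness applied to the single-agent deviation $(a,b)$ forces $d(v_1,y)\le d(v_1,x)=1$, pinning $y$ to $\{v_1\}\cup N(v_1)$. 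A symmetric SP argument applied to $(a',b')$ bounds $d(v_1,y')\le d(v_1,x')$, and combining this with DB and TMON applied to the one-step pair $(b,b')$ will constrain how $y'$ relates to $y$.

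Finally, I case-split on the relation between $y$ and $y'$. If $y\neq y'$ lie in different components of $T\setminus e$, then $(b,b')$ is itself a 1-TSI violation, but because $y\in\{v_1\}\cup N(v_1)$ we have $d(y,e)<d(x,e)$ and $\sum_{k\neq i}d(b_k,e)<\sum_{k\neq i}d(a_k,e)$ (agent $j$ is now closer to $e$), contradicting the minimality of the chosen pair. If $y=y'$, then SP for agent $i$ between $(b,b')$ holds trivially, so the SP inequalities for agent $i$ between $(a,a')$ must already be forced by $x,x'$ alone; pushing this through the triangle inequalities against $v_1$ yields a contradiction with $d(x,e)\ge 2$. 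If $y\neq y'$ lie in the same component of $T\setminus e$, ADR (Lemma~\ref{1-DR}) applied to $(b,b')$ places them at depth $1$ on opposite sides of their meeting point; combined with the SP inequalities across $(a,b)$ and $(a',b')$ and the rigid location of $y$ near $v_1$, the required distances become inconsistent. The main obstacle is the case bookkeeping in the third step: each of the possible positions of $y$ (at $v_1$, at $x$, or at another neighbor of $v_1$) pairs with several admissible positions of $y'$, and in each sub-case the contradiction is delicate and relies on the minimality of $\sum_{k\neq i} d(a_k,e)$ to block the escape route where $(b,b')$ would itself be a strictly simpler 1-TSI violation.
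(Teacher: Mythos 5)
Your skeleton is the paper's: reduce to a one-step deviation via Lemma~\ref{onestep}, take a violating pair minimizing $\sum_{k\neq i}d(a_k,[a_i,a'_i])$, use TPAR to produce a witness agent $j$ in the subtree behind the facility, move $j$ to the parent vertex $v_1$, and close with a case analysis on $y=f(b)$ and $y'=f(b')$. The genuine gap is that you never gain control over $y'$. Your only constraint on it is $d(v_1,y')\le d(v_1,x')$, which is vacuous here because $d(v_1,x')\ge 3$. The paper's proof hinges on applying the already-established ADR (Lemma~\ref{1-DR}) to the pair $(a',b')$: agent $j$'s trajectory lies entirely on the $a_i$-side of the edge while $f(a')$ sits at depth at least $2$ on the $a'_i$-side, so the facility can at most move to a sibling, which forces $y'$ to remain in $tree(a_i\rightarrow a'_i,a'_i)$ at depth at least $2$. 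Without that step your three cases range over far too many positions of $y'$ and do not close.

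Concretely: in the ``different components'' case you assert that $(b,b')$ is itself a 1-TSI violation, but 1-TSI is violated only if one of $y,y'$ lies at distance greater than $1$ from $e=[a_i,a'_i]$; since $y$ is $v_1$ or a neighbor of $v_1$, it can sit at depth $0$ (already when $d(x,e)=2$), and you have not bounded $y'$ away from $e$, so this case need not be a violation at all. (The paper avoids this because its $y'$ is guaranteed to stay at depth at least $2$.) In the ``$y=y'$'' case, the route via ``SP inequalities for agent $i$ forced by $x,x'$ alone'' proves nothing: $d(a_i,x)\le d(a_i,x')$ and $d(a'_i,x')\le d(a'_i,x)$ are perfectly consistent with the violating configuration; the actual contradiction in that sub-case is SP for agent $j$ on $(a',b')$, since a $y'$ near $v_1$ would be strictly closer to $a_j$ than $x'$ is. The third case is only asserted. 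A smaller slip: you allow $a_j=v_1$, in which case $b=a$ and $\sum_{k\neq i}d(a_k,e)$ does not strictly decrease; the witness must be taken in the subtree minus $\{v_1\}$, as in the paper. The repair is to add the ADR step for $(a',b')$ and then run exactly the paper's two cases: $y$ on the $a'_i$-side contradicts DB for $(b,b')$ (depth $0$ against depth at least $2$), while $y$ on the $a_i$-side makes $(b,b')$ a strictly smaller 1-TSI violation.
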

\begin{proof}
Assume by contradiction that there exists a pair of profiles $a,a'=(a_{-i},a'_i)$ s.t. 
\begin{align}
    tree(a_i\rightarrow a', f(a)) \neq  tree(a_i\rightarrow a',f(a_{-i},a'_i))
\end{align} 
Assume w.l.o.g. that $d(f(a'),[a_i.a'_i])>1$. We can assume that $d(a_i,a'_i)=1$ from lemma \ref{onestep}. Among these pairs, let $(a,a')$ be the pair that minimizes $\Sigma_{k\neq i}d(a_k,[a_i,a'_i])$.
From TMON, we have that 
\begin{align}
    f(a) \in tree(a_{i}\rightarrow a'_i, a_i)\textit{ and }f(a') \in tree(a_{i}\rightarrow a'_i, a'_i)\notag
\end{align} 
Since $d(a_i,a'_i)=1$, $$d(tree(a_{i}\rightarrow a'_i, f(a)), tree(a_{i}\rightarrow a'_i, f(a'))) =1$$ and in order to satisfy the DB condition, the difference between the depths of $f(a)$ and $f(a')$ has to be at most 1. Therefore,
\begin{align}
    depth( a_{i}\rightarrow a'_i,f(a))\geq 1\notag
\end{align}
Let $p$ denote the first vertex on the path from $f(a)$ to $a_i$. From TPAR, there exists an agent $j$ s.t. $ a_j \in  tree(a_i\rightarrow a'_i,p)\setminus p$
(see locations $a_{j_1},a_{j_2},a_{j_3}$ in Fig.~\ref{fig:TSI}.1).

We define two profiles $b=(a_{-j},p)$ and $b'=(a_{-\{i,j\}},b'_i=a'_i,b'_j=p)$.

For the pair $(a,b)$, when agent $j$ moves to $p$, it follows from DB that the facility can only move to location $p$, or to a location $z$ s.t. 
\begin{align}
    d(p,z)=1\label{TSI z}
\end{align}

For the pair $(a,a')$, when agent $j$ moves to $p$, it follows from ADR that the facility can only move from $f(a')$ to a different child of $p$ in the tree induced by the move $a_i \rightarrow a'_i$ (see Fig.~\ref{fig:TSI}.2, Fig.~\ref{fig:TSI}.4). Therefore it holds that
\begin{align}
   &y:=f(b);y':=f(b'); y' \in  tree(a \rightarrow a_i',a'_i);\notag\\&depth(a \rightarrow a',y') =  depth(a \rightarrow a',f(a'))\geq 2\label{TSI 2}
\end{align} 
(See Fig.~\ref{fig:TSI}).
We show that every possible location of the facility for the profile $(a_{-j},p)$ violates SP or the minimality condition:
\begin{enumerate}
    \item $y \in  tree(a_{i}\rightarrow a'_i, a'_i)$ (location $y_2$ in Fig.~\ref{fig:TSI}.3): DB is violated for the pair $(b,b')$ since $y,y'$ are in the same tree w.r.t. the move $a_i \rightarrow a'_i$, but from Eq.~\eqref{TSI z}, $depth(a_{i}\rightarrow a'_i,y) =0$ while $depth(a_{i}\rightarrow a'_i,y') =2$ from Eq.~\eqref{TSI 2}.
    \item  $y \in  tree(a_{i}\rightarrow a'_i, a_i)$ (locations $y_1,y_3,y_4$ in Fig.~\ref{fig:TSI}.3): The pair $(b,b')$ violates 1-TSI. This contradicts the minimality of $\Sigma_{k\neq i}d(a_k,[a_i,a'_i])$, since agent $j$ is closer to the trajectory of agent $i$ in profiles $b,b'$ than in profiles $a,a'$. 
\end{enumerate}
\end{proof}
   \begin{figure}[t]
       \centering
         \begin{tikzpicture}[scale=0.55]
\vertex (a) at (-8,0) [label=above:$p;a_i$] {};
 \vertex (b) at (-6,0)  {};
 \vertex (c) at (-9,-1) [label=left:$a_{j_1};f(a)$] {};  
\vertex (d) at (-7,-1) [label=right:$a_{j_2}$] {};
\vertex (e) at (-5,-1)  {};
\vertex (f) at (-4,-2)   {};
\vertex (y) at (-6,-2)   {};
\vertex (cc) at (-10,-2) [label=left:$a_{j_3}$]  {};
\vertex (gg) at (-11,1) [fill,scale=0.001,label=right:$1)$ $a$]  {};
\vertex (g) at (1,0)  {};
 \vertex (h) at (3,0) [label=above:$a'_i$] {};
 \vertex (i) at (0,-1) [label=left:$a'_{j_1}$] {};  
\vertex (j) at (2,-1) [label=right:$a'_{j_2}$] {};
\vertex (k) at (4,-1)  {};
\vertex (l) at (5,-2)  [label=right:$f(a')$] {};
\vertex (z) at (3,-2)   {};
\vertex (dd) at (-1,-2)  [label=left:$a'_{j_3}$] {};

\vertex (hh) at (-2,1) [fill,scale=0.001,label=right:$2)$ $a'$]  {};

\vertex (m) at (-8,-5) [label=above:$y_1;b_i;b_j$] {};
 \vertex (n) at (-6,-5)  [label=above:$y_2$] {};
 \vertex (o) at (-9,-6) [label=left:$y_3$] {};  
\vertex (p) at (-7,-6)  [label=left:$y_4$] {};
\vertex (q) at (-5,-6)  {};
\vertex (r) at (-4,-7)   {};
\vertex (aa) at (-6,-7)   {};
\vertex (ee) at (-10,-7)   {};

\vertex (ii) at (-11,-4) [fill,scale=0.001,label=right:$3)$ $b$]  {};

\vertex (s) at (1,-5) [label=above:$b'_j$] {};
 \vertex (t) at (3,-5) [label=above:$b'_i$] {};
 \vertex (u) at (0,-6)  {};  
\vertex (v) at (2,-6)  {};
\vertex (w) at (4,-6)  {};
\vertex (x) at (5,-7)  [label=right:$y'_1$] {};
\vertex (bb) at (3,-7) [label=left:$y'_2$]  {};
\vertex (ff) at (-1,-7)   {};

\vertex (jj) at (-2,-4) [fill,scale=0.001,label=right:$4)$ $b'$]  {};
\path

(b) edge (a)
(a) edge (c)
(a) edge (d)
(b) edge (e)
(f) edge (e)
(y) edge (e)
(c) edge (cc)

(g) edge (h)
(g) edge (i)
(g) edge (j)
(h) edge (k)
(k) edge (l)
(z) edge (k)
(i) edge (dd)

(m) edge (n)
(m) edge (o)
(m) edge (p)
(n) edge (q)
(q) edge (r)
(aa) edge (q)
(o) edge (ee)

(s) edge (t)
(s) edge (u)
(s) edge (v)
(t) edge (w)
(w) edge (x)
(bb) edge (w)
(u) edge (ff)
;   
\end{tikzpicture}
       \caption{An illustration of the possible locations of agent $j$ and the facility locations $y,y'$ for the profiles $a,a',b,b'$.}
       \label{fig:TSI}
   \end{figure}
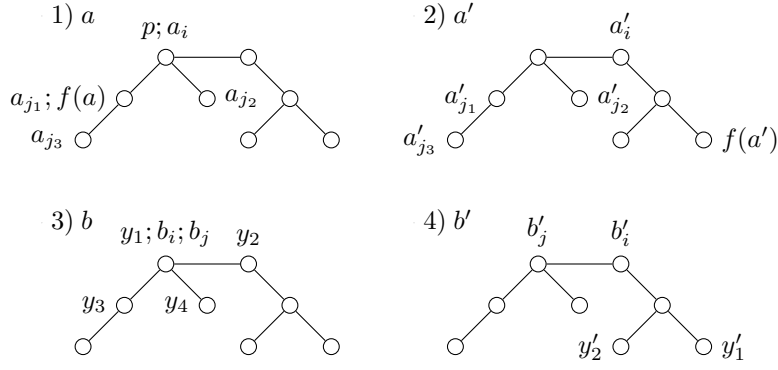
     \subsection{Our Characterization}
    We now complete our characterization of onto, SP mechanisms.
 \begin{lemma}\label{claim:easy}
 Every TMON and ATC mechanism $f$ on the discrete tree is SP.
 \end{lemma}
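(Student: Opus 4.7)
The plan is to reduce to Lemma~\ref{claim:proposal}, which already tells us that a pair of profiles violates SP if and only if it violates TMON or DB. Since TMON is assumed, it suffices to show that TMON together with ATC (= ADR + 1-TSI) implies DB for every pair $(a, a' = (a_{-i}, a'_i))$. So fix such a pair and write $x = f(a)$, $x' = f(a')$, $T = tree(a_i \to a'_i, x)$, $T' = tree(a_i \to a'_i, x')$. I split on whether $T = T'$.

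In the case $T = T'$, the subcase $x = x'$ is trivial, so assume $x \neq x'$. Then ADR supplies a vertex $z \in [a_i, x] \cap [a_i, x'] \cap [x, x']$ with $d(x,z) = d(x',z) = 1$. Since $x, x' \in T$, the path $[x, x']$ stays inside $T$, so $z \in T$; in particular $z$ lies on $[a_i, x]$ past the common root $r$ of $T$ on $[a_i, a'_i]$. Decomposing $d(a_i, x) = d(a_i, r) + depth(a_i \to a'_i, x)$ and $d(a_i, z) = d(a_i, r) + d(r, z)$, together with $d(z, x) = 1$, gives $d(r, z) = depth(a_i \to a'_i, x) - 1$; the symmetric computation for $x'$ gives $d(r, z) = depth(a_i \to a'_i, x') - 1$. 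Hence the two depths agree and DB holds with $|depth(x) - depth(x')| = 0 \leq d(T, T')$.

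In the case $T \neq T'$, let $r, r'$ be the roots of $T, T'$ on $[a_i, a'_i]$; since any path between a vertex of $T$ and a vertex of $T'$ must exit through $r$ and enter through $r'$, we have $d(T, T') = d(r, r') \geq 1$. Applying 1-TSI contrapositively to the deviation $a_i \to a'_i$ at profile $a$ yields $depth(a_i \to a'_i, x) \leq 1$; applying it to the reverse deviation $a'_i \to a_i$ at profile $a'$ (the path and tree definitions are symmetric in the two endpoints) yields $depth(a_i \to a'_i, x') \leq 1$. Therefore $|depth(x) - depth(x')| \leq 1 \leq d(T, T')$, which is the DB inequality. Combining both cases, $f$ satisfies DB; then Lemma~\ref{claim:proposal}, together with the assumed TMON, gives SP.

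The main obstacle is the bookkeeping in the $T = T'$ case: one has to verify that the ADR witness $z$ really lies past $r$ on $[a_i, x]$ (and $[a_i, x']$) rather than somewhere on $[a_i, r]$, because this is what forces $depth(x) = depth(x')$ via the simple additive distance decomposition. Once that is nailed down, the argument is pure distance arithmetic on the tree, and the $T \neq T'$ case is a direct invocation of 1-TSI in both directions.
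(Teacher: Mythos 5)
Your proposal is correct and follows essentially the same route as the paper: both reduce to Lemma~\ref{claim:proposal} by verifying DB, splitting on whether the two outcomes lie in the same tree (where ADR forces equal depths) or in different trees (where 1-TSI bounds both depths by 1, hence the depth gap by $d(T,T')\geq 1$). The only difference is that you spell out the distance arithmetic showing the ADR witness $z$ equalizes the depths, which the paper leaves implicit.
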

 \begin{proof}
Suppose $f$ is TMON and ATC. By definition it is 1-TSI and ADR. Consider an arbitrary pair of profiles $(a,a')$ s.t. $a' = (a_{-i}, a'_i)$ for some $i \in N, a'_i \in V$. If $tree(a_i\rightarrow a'_i,f(a)) =  tree(a_i\rightarrow a'_i,f(a'))$, it follows from ADR that
\begin{align}
    &depth(a_i \rightarrow a'_i,f(a)) -  depth( a_i \rightarrow a'_i,f(a')) = 0 \notag\\&=  d(tree(a_i \rightarrow a'_i, f(a)),  tree(a_i \rightarrow a'_i, f(a')))\label{same tree1}
\end{align}
If $tree(a_i\rightarrow a'_i,f(a)) \neq  tree(a_i\rightarrow a'_i,f(a'))$, it follows from 1-TSI that
\begin{align}
d(f(a),[a_i,a'_i]) \leq 1; \textit{ and } d(f(a'),[a_i,a'_i])\leq 1\label{different tree1}
\end{align}
Eq.~\eqref{different tree1} implies that 
\begin{align}
   &|depth(a_i \rightarrow a'_i,f(a)) -  depth( a_i \rightarrow a'_i,f(a_{-i},a'_i))| \leq 1\notag\\&\leq   d(tree(a_i \rightarrow a'_i, f(a)),  tree(a_i \rightarrow a'_i, f(a'))) \notag
\end{align}      
From Eqs.~\eqref{same tree1},~\eqref{different tree1}, $f$ satisfies DB, and therefore $f$ is SP from Lemma~\ref{claim:proposal}.
 \end{proof}

We conclude that every onto mechanism $f$ on the discrete tree is strategyproof if and only if it is TMON and ATC. This follows from Lemmas~\ref{claim:proposal}, \ref{1-DR}, \ref{claim:1-TSI} and \ref{claim:easy}.

\section{Shift-Invariant Mechanisms}\label{sec:linesSI}
  In this section, we show that on infinite discrete lines, the only anonymous, shift-invariant SP mechanisms are order statistics mechanisms---as on continuous lines~\cite{moulin1980strategy}. Therefore, all additional mechanisms that are SP in the discrete domain must single out specific
locations to satisfy Neutrality.
 
 We assume that the vertices of the line are indexed in increasing order and that the agents are ordered by their location in profile $a$, i.e., if $a_i < a_j$, $i < j$. 
\begin{definition}[Shift-Invariant] A mechanism $f$ on an infinite discrete line is \textbf{shift-invariant} if for every location profile $a = (a_1,\ldots,a_n)$ and $d \in N$: $f(a_1 + d,\ldots,a_n + d) = f(a) + d$.
\end{definition}
\begin{definition}[Anonymous] A mechanism $f$ is \textbf{anonymous} if for every location profile $a$ and every permutation of agents $\pi: N \rightarrow N$, it holds that 
$$f(a_1,\ldots,a_n) = f(a_{\pi_1},\ldots,a_{\pi_n})$$
 \end{definition}
 \begin{definition}[$k$th order statistic mechanism] A mechanism $f$ is the $k$th order statistic mechanism for some $k \in N$, if for every profile $a= (a_1,...,a_n)$, it holds that $f(a) = a_k$.
 \end{definition}
\begin{lemma}\label{neutral1}
    If a mechanism $f$ is onto, SP and shift-invariant, then for every profile $a=(a_1,\ldots,a_n)$, it holds that $f(a) = a_i$ for some $i \in [1,n]$.\footnote{This property is sometimes called ``tops-only" or ``peaks-only".}
     \end{lemma}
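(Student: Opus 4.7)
My plan is to suppose for contradiction that $v := f(a)$ differs from every $a_i$, and then engineer profiles whose common value of $f$ is forced to be simultaneously equal to $v+1$ (via shift-invariance) and different from $v+1$ (via SP together with TPAR). The starting point is Lemma~\ref{claim:TP}: a line is a discrete tree, so $v$ must be tree Pareto with respect to $a$. On a line this reduces to $\min_i a_i \leq v \leq \max_i a_i$, and since $v$ matches no agent, the sorted peaks straddle $v$, meaning $a_{(k)} < v < a_{(k+1)}$ for some $k$.

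The first step is a one-sided uncompromisingness that follows from SP alone: if an agent $i$ with $a_i < v$ shifts one step rightward to $a_i + 1 \leq v$, the two SP inequalities $d(a_i, v) \leq d(a_i, f(a_{-i}, a_i + 1))$ and $d(a_i + 1, f(a_{-i}, a_i + 1)) \leq d(a_i + 1, v)$ together pin the new outcome to $v$; the symmetric statement holds for agents on the right of $v$. Iterating single-step moves from every agent toward $v$, I reach a ``squeezed'' profile $a''$ in which each agent sits at $v-1$ or $v+1$ and $f(a'') = v$. Because the original peaks straddled $v$, $a''$ has at least one agent at $v-1$ and at least one at $v+1$.

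Shift-invariance then gives $f(a'' + 1) = v + 1$, so it suffices to show that transforming $a''$ into $a'' + 1$ through SP-respecting single-agent moves can only yield a final outcome in $\{v, v+2\}$. First I move each agent at $v-1$ to $v$: the moving agent lands on the current facility, so $d(v, f_{\mathrm{new}}) \leq d(v, v) = 0$ forces $f_{\mathrm{new}} = v$. This produces an intermediate profile $b_0$ with agents at $v$ and $v+1$ and $f(b_0) = v$. Next I push each remaining agent at $v+1$ to $v+2$ one at a time. At every such substep, SP for the moving agent gives $d(v+1, f_{\mathrm{current}}) \leq d(v+1, f_{\mathrm{new}})$, and since $d(v+1, f_{\mathrm{current}}) \geq 1$ (because $f_{\mathrm{current}} \in \{v, v+2\}$ inductively), we get $f_{\mathrm{new}} \neq v+1$; meanwhile TPAR on the updated profile---whose peaks all lie in $\{v, v+1, v+2\}$ with $v$ always achieved and $v+2$ achieved after the first substep---confines $f_{\mathrm{new}}$ to $[v, v+2]$. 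Consequently $f_{\mathrm{new}} \in \{v, v+2\}$ throughout, so $f(a'' + 1) \in \{v, v+2\}$, contradicting shift-invariance.

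The main obstacle is controlling the moves ``away from $v$'' in the final stage, where the one-sided uncompromisingness of the first step fails and the facility can genuinely move. The key observation that unblocks the argument is that I do not need the facility to remain at $v$; I only need to exclude the single value $v+1$ at each substep, and that exclusion follows immediately from the one SP inequality together with TPAR.
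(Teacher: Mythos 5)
Your proof is correct, and it reaches the conclusion by a noticeably different route than the paper's, even though both hinge on the same central trick: shifting the whole profile by one vertex, decomposing that shift into $n$ single-agent one-step moves, and arguing that the facility cannot accumulate a net displacement of $+1$ unless it coincides with some agent's peak. The paper runs this argument directly on the original profile $a$ and gets the conclusion in one pass from its characterization machinery: TMON forces the facility to move weakly rightward at each step, and DB forces any actual move to occur only at a step where the facility sits on the moving agent's edge, so the single $+1$ jump pins $f(a)$ to some $a_j$. You instead argue by contradiction, first invoking TPAR (Lemma~\ref{claim:TP}) to place $v=f(a)$ strictly between two peaks, then deriving a one-sided uncompromising property from raw SP inequalities to squeeze the profile to one supported on $\{v-1,v+1\}$ with the facility still at $v$, and only then performing the shift, using SP plus TPAR to exclude the value $v+1$ at every substep. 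Both arguments are sound; the paper's is shorter because it leans on TMON and DB, which were already established, whereas yours is more self-contained at the cost of the extra normalization stage (and it implicitly relies on TPAR holding for the infinite line, which the paper's Lemma~\ref{claim:TP} does cover since it is proved for bounded-degree, possibly infinite trees). One small point worth making explicit in your write-up: when $Int(a)=\emptyset$ (all peaks on at most two adjacent vertices), TPAR already forces $f(a)$ to be a peak, so the straddling configuration $a_{(k)}<v<a_{(k+1)}$ is indeed available in the remaining case.
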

    \begin{proof}
    Consider the profile $a' = (a_1+1,\ldots,a_n+1)$. From shift-invariance, $f(a') = f(a)+1$. The profile $a'$ can be achieved by $n$ moves of one step, one per an agent. For every such move of agent $j \in N$, resulting in profile $a^j$, the following hold:
    \begin{enumerate}
        \item $f(a^j) \geq f(a)$ from TMON.
        \item If $f(a^j) \neq f(a)$, from shift-invariance and the previous statement, it holds that $f(a^j) = f(a) + 1$.
    \end{enumerate}
    The only way to satisfy these conditions without violating DB, is by demanding that $f(a) = a_i$ for some $i \in N$ to assure that $$tree(a_j \rightarrow a_j+1, f(a)) \neq tree(a_j \rightarrow a_j+1, f(a^j))$$
    \end{proof}
\begin{lemma}\label{neutral2}
    An onto, SP, anonymous, shift-invariant mechanism $f$ for the line is a $k$th order statistic mechanism.
     \end{lemma}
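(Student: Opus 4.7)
The plan is to leverage Lemma~\ref{neutral1} (output lies at some agent's location), anonymity (output depends only on the sorted multiset), and shift-invariance to show that the sorted rank of $f(a)$ is a fixed constant $k^{*}$ across all profiles.

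First, by Lemma~\ref{neutral1} and anonymity, for every sorted profile $a = (a_1 \leq \cdots \leq a_n)$ there is a well-defined rank $k(a) \in \{1,\ldots,n\}$ with $f(a) = a_{k(a)}$, and shift-invariance gives $k(a + c\mathbf{1}) = k(a)$, so $k$ depends only on the profile's shape (the sorted gaps). The task reduces to showing $k$ is constant across shapes.

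I would proceed by contradiction: suppose two sorted profiles $a$ and $a^{*}$ satisfy $k(a) \neq k(a^{*})$, and reduce to a minimal counterexample in which $a^{*}$ is obtained from $a$ by a single agent moving by a single step while the profile remains sorted. A case analysis on which agent moved, and in which direction relative to the pivot $a_{k(a)}$, splits the situations into \emph{safe} ones (where a non-pivotal agent approaches the pivot without crossing it) and \emph{unsafe} ones (where a non-pivotal agent moves away from the pivot, or the pivotal agent itself moves). In each safe configuration, strategyproofness of the deviating agent in both directions of the move, intersected with Lemma~\ref{neutral1} (which restricts $f(a^{*})$ to an agent's location), already forces $f(a^{*}) = a_{k(a)}$, contradicting the assumption.

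The crux is the unsafe configurations, where strategyproofness alone admits $f(a^{*})$ landing on another agent's location. Here I would propagate the anomalous value along a chain of single-step deviations (each pin-down via SP and Lemma~\ref{neutral1}) to a sequence of profiles reachable from $a^{*}$, and then apply shift-invariance to translate the final profile back onto a one-step perturbation of $a$. The shifted profile's $f$-value, forced to be a short distance from some agent's position, then lets \textbf{another} agent in the original profile $a$ strictly benefit from a one-step misreport, directly contradicting SP. The model case is: if one posits $f(0,3,10) = 10$ but $f(0,2,10) = 0$, then iterated SP pin-downs give $f(0,2,9) = 0$, shift-invariance gives $f(1,3,10) = 1$, and agent~$1$ in $(0,3,10)$ strictly gains by reporting $1$ in place of $0$ (cost $1$ instead of $10$)---an SP violation.

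Finally, for sorted profiles with ties I would perturb one of the tied agents by a single step to reduce to the distinct case, apply the conclusion to obtain $f$ on the perturbed profile, and use SP together with anonymity to transfer the value of $k^{*}$ back, yielding $f(a) = a_{k^{*}}$ for every profile. The main obstacle is clean bookkeeping of the unsafe configurations: strategyproofness alone leaves multiple candidate values of $f(a^{*})$, and one has to thread shift-invariance through an explicit chain of single-agent SP inequalities to force uniqueness of the rank.
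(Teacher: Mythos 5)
Your route is genuinely different from the paper's, which never localizes to one-step perturbations at all: it takes the two conflicting profiles $a$ and $b$, uses shift-invariance to translate $b$ entirely to the left of $a$ (every coordinate of the shifted profile $c$ below every coordinate of $a$), and then morphs $c$ into $a$ one agent at a time by purely rightward jumps; TMON keeps the rank of the output from dropping below $k$ along the way, and anonymity plus $a_k > a_j$ closes the contradiction, with no case analysis and no tie handling.

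Your proposal, by contrast, stands or falls on the ``unsafe configurations,'' and there it has a genuine gap. First, your model case is miscalibrated: $f(0,3,10)=10$ together with $f(0,2,10)=0$ is already a direct SP violation for agent $2$ (cost $d(3,0)=3$ after the lie versus $d(3,10)=7$ when truthful), so no chain is needed there. The configurations that actually survive the two one-step SP inequalities plus Lemma~\ref{neutral1} are \emph{mirror jumps}: when agent $i$ at $a_i$ moves one step away from the pivot $a_p$, the facility may legally land on any agent in the window $[2a_i-a_p-2,\ 2a_i-a_p]$ on the far side (e.g.\ $f(0,5,10)=10$ and $f(0,4,10)=0$ are jointly SP-consistent). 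Your chain does kill that particular instance, but in general, to arrive (after the unit shift) at a profile differing from $a$ in a single coordinate, you must move every agent other than $i$ and the landing agent one step toward the landing point $a_q$, and each such auxiliary move is facility-pinning only for agents strictly between $a_q$ and $a_p$; agents at or beyond $a_q$ on the far side generate their own mirror windows, so the chain can recurse with no evident termination or ordering argument. That bookkeeping is exactly what you flag as the obstacle, and it is the proof rather than a detail; the case where the pivotal agent itself moves is likewise left open. (Relatedly, deferring ties to the end does not quite work, since a single-step path between two shapes can be forced through tied profiles.) The paper's translate-and-morph argument is precisely how all of this is avoided.
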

     \begin{proof}
      Assume by contradiction that $f$ is not a $k$th order statistic mechanism, i.e. there exists a pair of profiles $(a,b)$ s.t. $f(a)$ is the $j$th agent location and $f(b)$ is the $k$th agent location. Assume that 
         $j<k$. Let $d$ denote $b_n-a_1 + 1$. Consider a profile $c$ where $c_i = b_i - d$ if $d > 0$ and $c_i = b_i$ otherwise.
    from shift-invariance, it follows that
        $f(c) = c_k$.
        
        \medskip
    We now iteratively move each agent in profile $c$ to a location in profile $a$. In the $l$th iteration we move the agent with index $n+1-l$ to the location $a_{n+1-l}$, getting a sequence of profiles $(c^l)_{l=1}^n$. In every iteration, the output is the location of an agent with an index higher or equal to $k$ from TMON. After the $n$th iteration we reach a profile $c^n$ that is identical to $a$, up to permutation of agents. Thus by anonymity $f(a)=f(c^n)$. 
    
    On the  other hand, $f(c^n)\geq (c^n)_k = a_k > a_j = f(a)$, i.e. a contradiction. The proof is similar for the case $j >k$.
     \end{proof}
\begin{theorem}\label{claim:neutral}
    An onto mechanism $f$ is anonymous, shift-invariant and SP if and only if it is the $k$th order statistic for some $k \in \mathbb{N}$.
    \end{theorem}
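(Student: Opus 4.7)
The plan is straightforward because one direction is already packaged as Lemma~\ref{neutral2}: every onto, SP, anonymous, shift-invariant mechanism on the infinite discrete line is a $k$th order statistic, so nothing further is required for the ``only if'' direction. The remaining work is the ``if'' direction, i.e., checking that each $k$th order statistic mechanism satisfies all four properties.

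First I would dispatch the three easy properties. For ontoness, the constant profile $(x,\ldots,x)$ gives $f(a)=a_k=x$ for every vertex $x$. Anonymity is immediate from the definition of an order statistic, since the $k$th smallest coordinate depends only on the multiset of reports. Shift-invariance follows because sorting commutes with a common translation: the $k$th smallest coordinate of $(a_1+d,\ldots,a_n+d)$ is precisely $a_k+d$.

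The one substantive step is strategyproofness. I would argue it directly: fix a profile $a$ with $f(a)=a_k$ (the $k$th smallest entry) and consider an arbitrary deviation of agent $i$ to $a'_i$. If $a_i<a_k$, so that $i$ sits below position $k$ in sorted order, then any report $a'_i\leq a_k$ leaves the $k$th smallest coordinate equal to $a_k$, while any report $a'_i>a_k$ can only shift the $k$th order statistic to some value $\geq a_k$, which is at least as far from the true $a_i$ as $a_k$ itself. The case $a_i>a_k$ is symmetric, and $a_i=a_k$ is trivial. Alternatively, one can simply observe that the $k$th order statistic is a particularly simple generalized median voter scheme in the sense of Moulin~\cite{moulin1980strategy} (take $\alpha^{xy}_S=x$ if $|S|\geq n-k+1$ and $\alpha^{xy}_S=y$ otherwise) and quote the classical SP theorem.

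The conclusion then follows by combining this verification with Lemma~\ref{neutral2}. There is no real obstacle here; the only care required is the small case analysis in the SP step, which is entirely routine because the $k$th order statistic is the paradigmatic SP rule on the line.
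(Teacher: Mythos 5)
Your proposal is correct and matches the paper's proof: the ``only if'' direction is delegated to Lemma~\ref{neutral2} (the paper also cites Lemma~\ref{neutral1}, which underlies it), and the ``if'' direction is the same routine verification, with the same case analysis showing that an agent can move the $k$th order statistic only by reporting past it, which increases her distance to the facility.
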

     \begin{proof}
      The first direction follows from Lemmas~\ref{neutral1} and \ref{neutral2}.
   
    We prove the second direction. Clearly, a $k$th order statistic mechanism is anonymous and shift-invariant. The only way for an agent $i$ to change the chosen location is by reporting a location $a_i > a_k$ if $i < k$ or a location $a_i < a_k$ if $i > k$. In both cases, the distance of the agent from the facility will increase.
    \end{proof}

   
\section{Conclusion and Open Questions}
\label{chap:conclusion}

In this research, we provide a complete characterization of onto and strategyproof facility location mechanisms on discrete trees---under quadratic preferences. Interestingly, while a characterization for continuous trees exists due to \cite{schummer2002strategy}, these are not easily compared, as the latter uses a collection of median-like rules rather than axiomatic properties. The key property that allows comparison of continuous and discrete mechanisms is \emph{trajectory containment} (TC): while this property characterizes exactly the strategyproof onto mechanisms on continuous trees, it needs to be relaxed in a particular way to apply for discrete trees. 

A different characterization for discrete trees and general single-peaked preferences~\cite{peters2019unanimous} uses a third type of properties that map agents to leafs of the tree. 
     Thus a better understanding of how properties from the different models map onto one another is important.
     
One possible direction for further research is a characterization of strategyproof mechanisms on discrete weighted graphs. Contrary to continuous graphs, there is a finite set of possible locations for each agent and outcome. Unlike the discrete case, the distances between such two neighboring locations vary along the tree.

 Additionally, characterizations of SP mechanisms can promote the study of optimal SP approximation mechanisms, for example for minimizing the Egalitarian cost.

\bibliographystyle{unsrt}  
\bibliography{references}

\end{document}